\newtheorem{theorem}{Theorem}
\newtheorem{remark}{Remark}
\newtheorem{definition}{Definition}
\newtheorem{proposition}{Proposition}
\def\BibTeX{{\rm B\kern-.05em{\sc i\kern-.025em b}\kern-.08em
    T\kern-.1667em\lower.7ex\hbox{E}\kern-.125emX}}
\begin{document}
\title{From Target Tracking to Targeting Track — Part I: A Metric for Spatio-Temporal Trajectory Evaluation}
\author{Tiancheng Li, \textit{IEEE Senior Member}, Yan Song, 
Hongqi Fan, Jingdong Chen, \textit{IEEE Fellow}
\thanks{Manuscript created August 2024; \\
This work was supported in part by the National Natural Science Foundation of China under Grants 62422117 and 62201316 
and in part by the Fundamental Research Funds for the Central Universities. 
\\
Tiancheng Li and Yan Song 
are with the Key Laboratory of Information Fusion Technology (Ministry of Education), School of Automation, Northwestern Polytechnical University, Xi’an 710129, China, E-mail: t.c.li@nwpu.edu.cn, syzx@mail.nwpu.edu.cn. 
Hongqi Fan is with the College of Electronic Science and Technology, National University of Defense Technology, Changsha 410073, China, E-mail: fanhongqi@nudt.edu.cn. 
Jingdong Chen is with the School of Marine Science and Technology,  Northwestern Polytechnical University, Xi’an 710129, China, E-mail: jingdongchen@ieee.org.
}}

\markboth{Journal of \LaTeX\ Class Files,~Vol.~, No.~9, Aug~2024}%
{How to Use the IEEEtran \LaTeX \ Templates}

\maketitle

\begin{abstract}
In the realm of target tracking, performance evaluation 
plays a pivotal role in the design, comparison, and analytics of trackers. 
Compared with the traditional trajectory composed of a set of point-estimates obtained by a tracker in the measurement time-series, the trajectory that our series of studies including this paper pursued is given by a curve function of time (FoT). 
The trajectory FoT provides complete information of the movement of the target over time  
and can be used to infer the state corresponding to arbitrary time, not only at the measurement time. 
However, there are no metrics available for comparing and evaluating the trajectory FoT. 
To address this lacuna, we propose a metric denominated as the 
spatio-temporal-aligned trajectory integral distance (Star-ID). 
The Star-ID associates and aligns the estimated and actual trajectories in the spatio-temporal domain and 
distinguishes between the time-aligned and unaligned segments in calculating the spatial divergence including false alarm, miss-detection and localization errors. 
The effectiveness of the proposed distance metric and the time-averaged version is validated through theoretical analysis and numerical examples of a single target or multiple targets.
\end{abstract}

\begin{IEEEkeywords}
Tracker performance evaluation, target tracking, trajectory function of time, distance metric. 
\end{IEEEkeywords}

\section{Introduction} \label{sec：intro}

\IEEEPARstart{M}{ulti}-target tracking (MTT)  is an intricate process that entails the sequential estimation of both the cardinality (number of targets) and the kinematic states of multiple targets, where both parameters are potentially time-variant \cite{Bar-Shalom01,Sarkka13book,Vo15mtt}. It has been a key technology in the applications of autonomous driving, guidance and defense systems, traffic control, and robotics. 
Typically, each target is assigned a unique identification, and the output of the tracking algorithm is manifested as distinguishable temporal sequences of state estimates, referred to as tracks \cite{Zhang2015Trajectory,Vo19msGLMB,garcia2021time,DiLi-24label}. 
Quantifying the similarity between the ground truth and estimated tracks  \cite{ristic2011metric,Barrios17metrics,vu2020complete}, 
the core for performance evaluation, which needs to grapple with significant ambiguities \cite{Drummond92} holds paramount significance in the design, comparison, and analytics of any tracker \cite{Branlco1999tool,1999Design,Rothrock2000Performance}. Irrespective of the choice of tracking algorithms, the metric plays an essential role in the performance evaluation. Arguably, different metrics or even different parameter choices 
will all lead to diverse results \cite{Nguyen23Trustworthy}.  

\subsection{Relevant Work}
The most widely used metric for multi-target estimate evaluation is the so-called optimal sub-pattern assignment (OSPA) distance \cite{schuhmacher2008consistent}. This metric extends the Wasserstein
metric to the general case where two point-sets may have different numbers of points. It comprises two components, referred to as the localization error and cardinality/quantity error; the latter accounts for the false alarm (FA) and missed detection (MD), which are two notable challenging issues for MTT. A meaningful extension of this approach is to incorporate the uncertainty of each point-estimate \cite{Nagappa2011ospaUncertainty}. 
Relevantly, the target existence probability taken as the track quality information is accounted for within the OSPA in \cite{he2013track}. Furthermore, 
the negative log-likelihood of the posterior that incorporates all uncertainties is defined as a measure \cite{Pinto2021NLL}. 
The OSPA metric has inspired many other metrics such as the 
generalized OSPA (GOSPA) \cite{rahmathullah2017generalized, Angel2020GOSPA}, and complete OSPA (COSPA) \cite{vu2020complete}, 
among others \cite{Branko2010ospaLabeld,vu2014new}. 

However, calculating the estimation error at each sampling step in isolation is inadequate for tracker evaluation. Instead, the metric should account for the dissimilarity between the overall tracks consisting of time-series of point-estimates \cite{Bento16arXiv,Beard20LMB,Nguyen23Trustworthy} or of sub-densities \cite{DiLi-24label}; see a substantial body of work reviewed in \cite{hu2023spatio,su2020survey,Nguyen23Trustworthy}. 
This line of thought leads to the foundation of the metric of $\text{OSPA}^{(2)}$ \cite{beard2017ospa,Beard20LMB}, which is a spatio-temporal metric that accounts for both the spatial and temporal information of the tracks. Notably, the metric proposed in  \cite{Bento16arXiv} allows trajectory association to be changed over time and incorporates the confusion of trajectories' identity in an optimal way. 
Nonetheless, these OSPA-like metrics are predominantly designed for use with traditional recursive filters/trackers where the estimates are obtained in discrete measurement time series and so the evaluation deals with point-state estimates obtained in time series as illustrated in Fig. \ref{fig:diff_track} (a) and (b). They are not applicable to the continuous-time trajectory case--with the exception that the metric proposed in \cite{Bento16arXiv} is claimed, yet not validated, to have the potential for extension to continuous time -- where each trajectory is given by a continuous-time curve as shown in Fig. \ref{fig:diff_track} (c), not a set of points obtained in discrete time series. Although there are a large number of studies on continuous-time state estimation \cite{Talbot2024continuoustimestate}, there is to date no metric available for general continuous-time trajectories except for our preliminary work \cite{xin2022metric} as to be explained in this paper. 


\subsection{Introduction to Companion Papers}
Our series of studies are founded on modeling the target trajectory using a curve function of time (FoT) as shown in Fig. \ref{fig:diff_track} (c), which was termed the trajectory FoT (T-FoT) \cite{li2018joint,li2023target}. 
That is, the evolution of the target state over time is modeled by T-FoT $f:\mathbb{R}^+ \rightarrow \mathcal{X}$ in the spatio-temporal space and 
the target state at time $t$ is given by  
\begin{equation} \label{eq:T-FoT}
{\mathbf{x}_t} = f(t), 
\end{equation}
where $t\in \mathbb{R}^+$ denotes the time and $\mathcal{X}$ denotes the state space.

Specifically, the T-FoT can be given in the forms of a polynomial \cite{li2018joint,li2023target} (see also the companion paper \cite{Li25TFoT-part2}), B-spline basis functions \cite{Furgale12} and so on \cite{Pacholska20}; see the review given in \cite{Talbot2024continuoustimestate}. That being said, the metric we seek in this paper is unlimited to any specific form but applicable for the general continuous-time case. 
Most existing studies, however, do not provide uncertainty about their T-FoT estimate. Moreover, they fail to utilize any information regarding the state correlation over time. To put it simply, the states corresponding to close time instants are more likely to be in closer spatial proximity to each other than those far away. To take advantage of this latent state temporal correlation and to provide an assessment of the uncertainty associated with the T-FoT estimate, within our series of companion papers \cite{Li25TFoT-part2,Li25TFoT-part3} 
we further model 
the collection of the target state as a stochastic process (SP) $ \mathcal{F} \triangleq \{\mathbf{x}_t: t\in \mathbb{R}^+ \} $ defined on the continuous time. 
That is, any particular T-FoT is a sample path of this SP, i.e., 
\begin{equation}
	f(t)\sim \mathcal{F}. \label{eq:TSP} 
\end{equation}

The contributions of our series of companion papers including three parts are structured as follows. 
\begin{itemize}
	\item This paper, which serves as Part I of this series of papers, proposes a metric for evaluating the quality of any T-FoT estimate $\hat{f}$. This provides a distance measure between any two trajectories represented in the form of an FoT. To avoid distracting the reader's attention from our key contribution, this paper will not delve further into the SP model but focus only on the T-FoT metric. 
	\item Part II \cite{Li25TFoT-part2} and Part III \cite{Li25TFoT-part3} offer solutions to online learning of the SP via decomposing it into a deterministic FoT and the residual SP according to some decomposition approaches \cite{Cramer1961,Urbin2012time}. The former fits the trend of the trajectory by the best polynomial FoT based on regularized optimization 
	while the latter approximates the residual SP by 
	two representative SPs: the Gaussian process (GP) and Student's $t$ process (StP).
\end{itemize}




\begin{figure*}[htbp]
\vspace{-2mm}
\centerline{\includegraphics[width=0.8\linewidth]{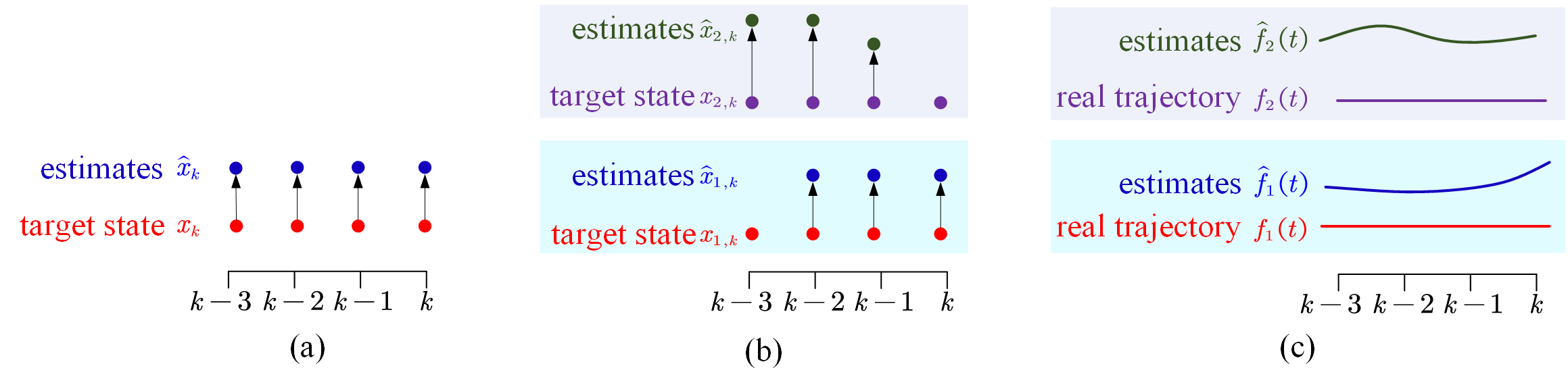}}
\vspace{-2mm}
\caption{Different tracking scenarios: (a) single target point-state estimation; (b) multiple target point-state estimation; (c) multi-target T-FoT estimation. } \label{fig:diff_track}
\end{figure*}

\subsection{Contribution and Organization of This Paper}
From both theoretical and practical perspectives, a metric capable of appropriately measuring the similarity of continuous-time T-FoTs that can be arbitrary forms is of vital importance for analyzing and comparing the performance of these trackers and for evaluating different strategies for algorithm improvement. 
In addition, it is required for multi-sensor cooperation and fusion in order to determine the correspondence between trajectories obtained by different sensors. 
Such a metric that is mathematically rigor and meets the need of multiple spatio-temporal T-FoT evaluation constitutes the contribution of this paper. 


The remainder of this paper is organized as follows. Existing representative MTT metrics are 
briefly reviewed in Section \ref{sec:pre}. 
Section \ref{sec:elemental} discusses the elemental distance/metric 
to be utilized in the proposed T-FoT metric, referred to as the spatio-temporal-aligned trajectory integral distance (Star-ID), which is given later in Section \ref{sec:Star-ID}. 
Scenario study is given in Section \ref{sec:simula} before the paper is concluded in Section \ref{sec:con}. 
The main 
notations used in this paper are given in Table \ref{tbl:symbols}.

\begin{table}[htbp]
\centering
\caption{List of key notations}
\begin{tabular}{ll}
\toprule
Notation & Interpretation \\
\midrule
$t$ & $t\in \mathbb{R}^+$, the continuous time (positive real number) \\
$\mathcal{F} $  & stochastic process of the state\\
$\mathbf{x}_k$ & the target state at time $k$ \\
$k$ & $k\in \mathbb{N}^+$, the discrete sampling time (integer) \\
$\mathbf{y}_k$ & the measurement at time $k$ \\
$h_k(\cdot)$ & the measurement function at time $k$ \\
$\mathbf{v}_k$ & the measurement noise at time $k$ \\
$f(t), g(t)$ & the (real or estimated) target T-FoTs \\
$F\left ( t;\mathbf{C} \right )$ &  a polynomial T-FoT with parameters $\mathbf{C}$   \\
$\mathbf{C}$ &  ste of polynomial parameters $\mathbf{C} = \left \{ \mathbf{c}_{0},\mathbf{c}_{1},\dots ,\mathbf{c}_{\gamma } \right \} $   \\
$\gamma$ &  the order of the T-FoT function \\
$T_\text{w}$  & the maximum length of the sliding time-window \\
$r$ & the trajectory dimension\\
$c$ & the cutoff coefficient for OSPA and OSPA$^{(2)}$\\
$p$ & the metric order for OSPA, OSPA$^{(2)}$ and Star-ID \\
$c_{\text{SFA}},c_{\text{SMD}}$ &segment cutoff coefficient for  FA and MD, respectively \\
$T_{\text{SFA}}, T_{\text{SMD}} $ &duration of the segment FA and MD, respectively \\
$c_{\text{TFA}},c_{\text{TMD}}$ &trajectory cutoff coefficient for FA and MD, respectively \\
$T_{\text{TFA}},T_{\text{TMD}} $ &duration of the trajectory FA and MD, respectively\\
\bottomrule
\end{tabular}
\label{tbl:symbols}
\end{table}

\section{Preliminaries} \label{sec:pre}

In this section, we will briefly review the OSPA-type metrics for point sets 
and our previous related work. 

\subsection{OSPA: Metric on Sets of Points}
The widely used OSPA distance \cite{schuhmacher2008consistent} measures the distance between two sets of points which adds consideration of the cardinality inconsistency to the Wasserstein
metric. 
For $\Phi =\left\{ \phi ^{\left( 1 \right)},\phi ^{\left( 2 \right)},\cdots ,\phi ^{\left( m \right)} \right\} $ and $\Psi =\left\{ \psi ^{\left( 1 \right)},\psi ^{\left( 2 \right)},\cdots ,\psi ^{\left( n \right)} \right\} $, with $m\le n$, the OSPA distance with order $p$ and cutoff coefficient $c$ is defined as follows
\begin{align}\label{eq:ospa}
  &\text{OSPA}_{p}^{(c)}(\Phi ,\Psi) \nonumber \\
  &={\left( \frac{1}{n}\bigg( \underset{\pi \in {{\prod }_{n}}}{\mathop{\min }}\,\sum\limits_{i=1}^{m}{{\check{d}^{(c)}}{{({{\phi}^{(i)}},{{\psi}^{(\pi (i))}})}^{p}}}+{{c}^{p}}(n-m) \bigg) \right)}^{\frac{1}{p}\;},
\end{align}
where ${\check{d}^{(c)}}({\phi}^{(i)},{\psi}^{(i)}) \triangleq \min \left(c,d({\phi}^{(i)},{\psi}^{(i)})\right)$, in which $d(\cdot,\cdot)$ is a metric on the single-target state space. If $m> n$, then $d_{p}^{(c)}(\Phi ,\Psi)= d_{p}^{(c)}(\Psi ,\Phi)$.

\subsection{Unnormalized OSPA and GOSPA}
As shown, the OSPA is a normalized metric 
which, however, suffers from an obvious drawback as mentioned by the creators \cite{schuhmacher2008consistent} that it is insensitive to the case where one is empty. In other words, when one set is empty, the OSPA distance takes on the cutoff value of $c$ regardless of the cardinality 
of the other set. 
Another issue with the OSPA metric as pointed out by \cite{rahmathullah2017generalized} is
that it gives the same result $c$ for the distance between two sets
$\Phi$ and $\Psi$ where $|\Psi| = |\Phi| - 1$, and between two sets $\Phi$ and
$\Psi \cup  {z}$ where $d(\mathbf{x}, \mathbf{z}) \geq  c$ for $\mathbf{z} \not\in \Psi$ and $\forall \mathbf{x} \in \Phi$. 
Yet, this can be simply resolved by the removal of the normalization, the multiplier $\frac{1}{n}$, resulting in the unnormalized OSPA, which is proportional to the COSPA given in \cite{vu2020complete}. By further adding parameter $\alpha$ as a denominator in the cardinality error, one gets the GOSPA \cite{rahmathullah2017generalized, Angel2020GOSPA}, c.f., \eqref{eq:ospa} 
\begin{align}\label{eq:Gospa}
  &\text{GOSPA}_{p}^{(c)}(\Phi ,\Psi) \nonumber \\
  &={\left(  \underset{\pi \in {{\prod }_{n}}}{\mathop{\min }}\,\sum\limits_{i=1}^{m}{{\check{d}^{(c)}}{{({{\phi}^{(i)}},{{\psi}^{(\pi (i))}})}^{p}}}+{ \frac{{c}^{p}}{\alpha} }(n-m) \right) }^{\frac{1}{p}\;}.
\end{align} 

Differently from the OSPA that is a per-target error, the GOSPA is an unnormalized metric which sums up all distances between the finite sets and increases with the increase of the sizes of the sets. Sensitivity of the OSPA and GOSPA metrics to the parameters has been analyzed in \cite{Barrios23comparison}. 

\subsection{$\text{OSPA}^{(2)}$: Metric on Sets of Tracks}
The so-called $\text{OSPA}^{(2)}$ metric \cite{beard2017ospa,Beard20LMB} measures the distance between two sets of tracks which does not only capture target state errors, cardinality errors, but also phenomena such as track switching and track fragmentation. Here, each track is a time-series of point-estimates as shown in Fig. \ref{fig:diff_track} (b). 


 Consider two set of tracks $\mathbf{X}=\left\{\Phi^{(1)}, \Phi^{(2)},\cdots , \Phi^{(m)}\right\}$ and $\mathbf{Y}=\left\{\Psi^{(1)}, \Psi^{(2)},\cdots , \Psi^{(n)}\right\}$, 
 where each track $\Phi^{(i)}$ or $\Psi^{(j)}$ consists of a time-series of points, $1 \leq i\leq m, 1 \leq j\leq n $. 
 Assuming $ m \le n$, the $\text{OSPA}^{(2)}$ distance ${d}_{p,q}^{\left( c,w \right)}\left( \mathbf{X},\mathbf{Y}\right)$ between $\mathbf{X}$ and $\mathbf{Y}$ is defined as
\begin{align}\label{eq:ospa2}
  &\text{OSPA}_{p,q}^{(2)\left( c,w \right)}\left( \mathbf{X},\mathbf{Y} \right)  \nonumber \\
  &={\left(\frac{1}{n}\bigg( \underset{\pi \in {{\prod }_{n}}}{\mathop{\min }}\,\sum\limits_{i=1}^{m}{{{\widetilde{d}_{q}^{\left( c,w \right)}\left( \Phi^{(i)},\Psi^{(\pi(i))}\right)}^{p}}}+{{c}^{p}}(n-m) \bigg) \right)}^{\frac{1}{p}}
\end{align}
where 
$q$ is the order of the base distance, all positive-defined 
and 
$\widetilde{d}_{q}^{\left( c,w \right)}\left( \Phi,\Psi \right)$ between two tracks $\Phi,\Psi $ is defined as 
\begin{equation} \label{eq:ospa2-trackDis}
\widetilde{d}_{q}^{\left( c,w\right)}\left( \Phi,\Psi\right) =\left( \sum_{t=1}^{K}{\left( w\left( t \right) d^{\left( c \right)}\left( \left\{ \Phi\left( t \right) \right\} ,\left\{ \Psi\left( t \right) \right\} \right) \right) ^q} \right) ^{\frac{1}{q}}
\end{equation}
where $w(t)>0$ is a positive weighting function defined for $t\in \left\{ 1,\cdots , K \right\} $ which includes all time indices from the
beginning to the end of the tracking scenario, such that $\sum_{t=1}^{K}{w\left( t \right)}=1$ and the singleton-based OSPA distance reduces to the following 
\begin{equation}
d^{(c)}\left( \phi, \psi\right) = \left\{ \begin{array}{l}
\begin{aligned}
	0,~~ &|\phi| = |\psi| =0\\
        c, ~~ &|\phi| \neq |\psi| \\
	\min \left(c,d({\phi} , {\psi} )\right), ~ & |\phi| = |\psi| =1\\
 \end{aligned}
\end{array}  \right. 
\end{equation}

Note that ${d}_{p,q}^{\left( c,w \right)}\left( \mathbf{X},\mathbf{Y} \right)  = {d}_{p,q}^{\left( c,w \right)}\left( \mathbf{Y},\mathbf{X} \right)  $ if $m>n$. While the OSPA distance indicates the per-target error, the $\text{OSPA}^{(2)}$ distance can be interpreted as the time-averaged per-track error. In both, smaller values indicate higher accuracy. 

\subsection{Integral Multi-target Trajectory Assignment (IMTA) Metric}

An intuitive approach to the T-FoT evaluation is discretizing the continuous-time curve function into discrete points and using established metrics such as the $\text{OSPA}^{(2)}$ for evaluation. This, however, will cause the loss of detailed information of the trajectory, such as the length and smoothness, and loss of the advantage of obtaining the whole trajectory. For example, this approach can not distinguish the two different trajectory estimates shown in Fig. \ref{fig:trajec_discret} and therefore undermines the purpose of performance evaluation metric. Instead, we need to directly calculate the distance between two sets of continuous-time functions. 

\begin{figure}[htbp]
\centerline{\includegraphics[width=0.6\linewidth]{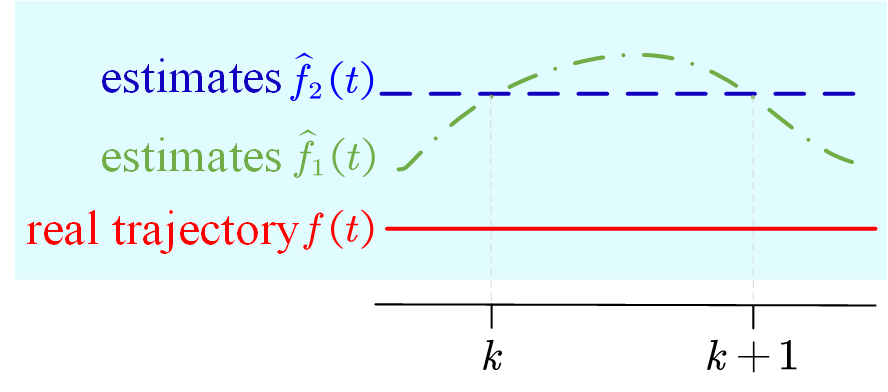}}
\vspace{-3mm}
\caption{Two different T-FoTs share the same discretizing points. 
} \label{fig:trajec_discret}
\end{figure}

Our earlier attempt as called the IMTA metric in \cite{xin2022metric} is first reviewed.  
Let $ \mathcal{F} =\{{{f}_{1}},{{f}_{2}},\ldots ,{{f}_{ m}}\}$ and $ \mathcal{G} =\{{{g}_{1}},{{g}_{2}},\ldots ,{{g}_{n}}\}$ be finite subsets of continuous-time T-FoTs of size $m$ and $n$, respectively, one being the ground truth and the other estimates.
If $m \le n$, the IMTA metric is defined as
\begin{align}
\text{IMTA}_{r}^{(c)}(\mathcal{F},\mathcal{G})
& = \underset{\pi \in \Pi_n}
{\mathop{\min }}
\bigg(  \sum\limits_{i=1}^{m} d_{r}^{(2)}\left({{f}_{i}},{{g}_{\pi (i)}}\right) +({{c}_\text{TFA}} {{T}_\text{TFA}}\nonumber \\
&~~~~~~+{{c}_\text{TMD}}  {{{T}}_\text{TMD}})^{ r }
\bigg), \label{eq:IMTA}
\end{align}
where 
$T_\text{TFA}$ and $T_\text{TMD}$ denote the duration time of the false and missed trajectories, and $c_\text{TFA}$ and $c_\text{TMD}$ are the corresponding penalty in two cases, respectively,
and the pairwise distance \( d_{r}^{(2)}\left({f},{g}\right) \) between two T-FoTs $f, g$ is as follows
\begin{equation} \label{eq:base-distance-IMTA}
d_{r}^{(2)}(f,g)=\frac{1}{\tau}\Big(d_{t_1,t_2}^{(2)}(f,g)+ ({c_\text{SFA}}  {T_\text{SFA}} + {c_\text{SMD}}  {T_\text{SMD}})^{r}\Big),
\end{equation}
where $t_{1}, t_{2}$ indicate the starting and ending times of the common time domain of the two T-FoTs $f, g$, respectively, $\tau={({t_2} - {t_1} + {T_\text{SFA}} + {T_\text{SMD}}) ^{ r}}$, 
$T_\text{SFA}$ and $T_\text{SMD}$ denote the duration time of the FA and MD, and $c_\text{SFA}$ and $c_\text{SMD}$ are the corresponding FA and MD penalty, respectively, 
$d_{t_1,t_2}^{(2)}(f,g)$ is the $\ell_2$ distance between two functions \(f\) and \(g\) in the common time domain $[t_1, t_2]$, c.f. \eqref{eq:l_p}
\begin{equation} \label{eq:L2-integral}
d_{t_1,t_2}^{(2)}(f,g) =\int_{t_1}^{t_2}{\left( f\left( t \right) -g\left( t \right) \right) ^\mathrm{T}\left( f\left( t \right) -g\left( t \right) \right) dt}.     
\end{equation}

%
%

As shown above, the pairwise distance between associated trajectories is defined by the integral of the divergence of two functions, whether $\ell_1$ or $\ell_2$ norms, over time. The unit of the metric is spatio-temporal such as $\text{m}\cdot\text{s}$. This is significantly different from the metric for point sets. 
The IMTA metric, however, has two limitations and drawbacks: 
\begin{itemize}
    \item IMTA does not properly coordinate the localization error and FA/MD error since in the former the error in each dimension is summarized while in the latter it is accounted for by the exponential calculation. Two parts have different units of error unless $r=1$. This theoretical drawback can be fixed by substituting the exponential calculation in the latter with multiplication. 
\end{itemize}


We next propose a new spatio-temporal metric that overcomes the above limitations and drawbacks. 

\section{Elemental Segment Distance} \label{sec:elemental}




Similar to the distance between point sets, the distance between T-FoT sets that needs to solve the trajectory-to-trajectory (T2T) association problem first. But differently, the T-FoT association needs to be carried out in the continuous time domain and both the trajectory FA/MD (TFA/TMD) and segment FA/MD (SFA/SMD) are defined too in the continuous time domain. More formally speaking, 

\begin{itemize}
\item \textit{Trajectory temporal-alignment}: The T-FoT divergence needs to account for the temporal overlap and non-overlap parts, separately. 
The former constitutes the spatial difference between the temporally aligned T-FoT segments while the latter constitutes the SFA or SMD, namely unaligned segment error. 
As a result, two T-FoTs that overlap spatially but not temporally are different as illustrated in Fig. \ref{fig:diff_time}. 
\item \textit{FA and MD errors}: The FA and MD are not solely determined by a simple disparity in the number of tracks (i.e., TFA/TMD) but also by the trajectory non-overlap/unaligned segments (i.e., SFA/SMD). In addition, both cases need to take into account the duration of the concerning FA/MD. For instance, the cost of missing a track (or a segment) spanning $1$s is evidently different from that spanning $5$s \footnote{Similar issue has been overlooked in existing track metrics such as the $\text{OSPA}^{(2)}$ distance in which, as shown in \eqref{eq:ospa2}, each FA or MD is recognized the same as error $c$ in the result, no matter how long each FA or MD exists. 
}. In an extreme case when the duration of the MD is so short that it is less than a sampling period as shown in Fig. \ref{fig:MD<1}, no existing point-estimate metric can properly calculate the MD error. In the T-FoT approach, the penalty for FA and MD needs to take into account the duration. 
\end{itemize}

\begin{figure}[htbp]
\centerline{\includegraphics[width=\columnwidth]{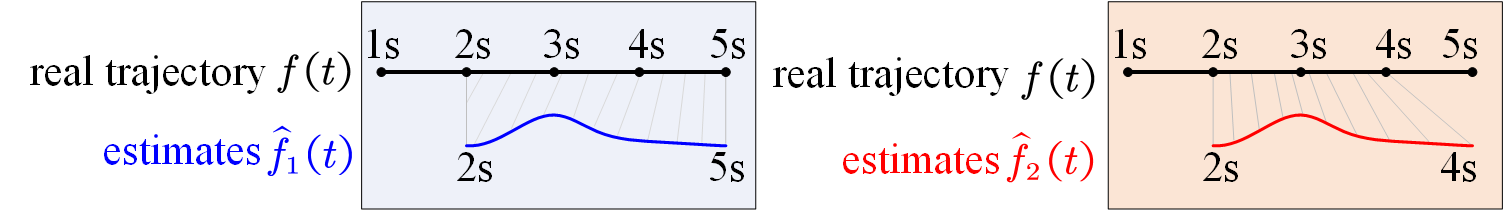}}
\caption{Two T-FoT estimates overlap spatially but not temporally.}
\label{fig:diff_time}
\end{figure}

 \begin{figure}[htbp]
\centerline{\includegraphics[width=.3\columnwidth]{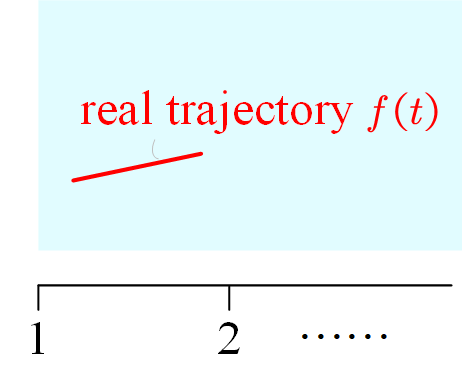}}
\caption{A target exists less than a sampling period and is undetected by the estimator.}
\label{fig:MD<1}
\end{figure}


Therefore, the first step to calculate the distance between two trajectories is to divide them into segments in the time domain according to their temporal alignment. The second step is to calculate the spatio-temporal-domain-based distance with regard to each segment, which forms the elemental units to construct our final metric for T-FoT comparison.  

\subsection{Temporally Aligned Segment}



We define the spatio-temporal distance between two temporally aligned segments as follows.

\begin{definition}[\(\ell_p\) distance]
The \(\ell_p\) distance of two T-FoTs \(f\), \(g\) in the same time interval \([t_1,t_2]\) is defined as
\begin{equation} \label{eq:l_p}
d_{t_1,t_2}^{(p)}\left( f,g\right)  \triangleq \int_{t_1}^{t_2}{ \| f(t)-g(t)\|_pdt},
\end{equation}
where the $\ell_p$ norm is defined as $\lVert \mathbf{x} \rVert_p \triangleq \left(\sum_{i=1}^r{\left|\mathbf{x}[i]\right|^p}\right)^{\frac{1}{p}}$. 
\end{definition}

\subsection{Temporally Unaligned Segment: SFA/SMD}

SFA refers to the estimated trajectory segment that is not assigned to any real trajectory. The SFA duration corresponding to the time-window \([t_{\text{SFA},1},t_{\text{SFA},2}]\) is given by 
\begin{equation}
 T_{\text{SFA}}=t_{\text{SFA},2}-t_{\text{SFA},1}.
 \end{equation}

SMD refers to the real trajectory segment that is not assigned to any estimated trajectory. The SMD duration corresponding to the time-window \([t_{\text{SMD},1}, t_{\text{SMD},2}]\) is given by
\begin{equation}
T_{\text{SMD}}=t_{\text{SMD},2}-t_{\text{SMD},1}.
\end{equation}

Correspondingly, penalty should be applied to the SFA and SMD by using the segment cutoff coefficients $c_{\text{SFA}}$ and $c_{\text{SMD}}$, respectively. Similarly, we denote the duration and cutoff coefficients for the TFA and TMD by $T_{\text{SFA}}, c_{\text{TFA}}$ and $T_{\text{SMD}}, c_{\text{TMD}}$, respectively. 
A longer duration and a higher penalty will indicate a greater distance between the estimated and real trajectories. Note that these cutoff parameters should not be arbitrarily chosen but follow certain rules as to be addressed in section \ref{sec:Star-ID}.B. 

\subsection{Distance between Two T-FoTs}
A convincing distance between two sets of T-FoTs has to be built on the base of a meaningful base-distance between two single T-FoTs. 
After temporally aligning the estimated and actual T-FoTs $f, g$, the spatio-temporal distance can be defined as follows:
\begin{align} 
d_{\boldsymbol{c}}^{(p)}\left( f,g \right) = \Big( r \left( c_{\text{SFA}} T_{\text{SFA}}+c_{\text{SMD}} T_{\text{SMD}} \right) ^p +d_{t_1,t_2}^{(\boldsymbol{c},p)} (f,g) \Big)^{{1}/{p}\;}, \label{eq:f-g-dist}
\end{align}
where 
\begin{align}\label{dcpt1t2}
d_{t_1,t_2}^{(\boldsymbol{c},p)} (f,g) 
\triangleq \min \Big(&\big(d_{t_1,t_2}^{(p)} (f,g)\big)^p , \nonumber \\ 
&~~~ r(c_{\text{SFA}}+c_{\text{SMD}})^p(t_2-t_1)^p  \Big).
\end{align}
Here, \([t_1,t_2]\) denotes the aligned time interval of \(f\) and \(g\), $p$ represents the metric order s.t. $1\le p< \infty $,
the parameter \(r\) scales the distance with the dimension of the trajectory in order to comply with the $\ell_p$ norm  in \eqref{eq:l_p}. 
\begin{definition}[Distance metric] \label{def:distance}
Given a metric $d(\cdot,\cdot)$ and any three items $f, g$ and $h$, $d$ is a distance metric if it has the following four properties: 
\begin{enumerate}
\item non-negativity: \( d\left( f,g \right) \ge 0\)
\item identity: \(d\left( f,g \right)= 0 \Longleftrightarrow f(t)=g(t) \)
\item symmetry: \(d\left( f,g \right)=  d\left( g,f \right) \)
\item triangle inequality: \( d\left( f,g \right) \le d\left( f,h \right) +d\left( h,g \right)\)
\end{enumerate}
\end{definition}

\begin{theorem} \label{theorem_d(f,g)}
When $c_{\text{SFA}}=c_{\text{SMD}} >0$ and $ 1\le p< \infty$, $d_{\boldsymbol{c}}^{(p)}\left( f,g \right)$ is a distance. 
\end{theorem}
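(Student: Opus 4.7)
My plan is to verify the four properties of Definition~\ref{def:distance} in turn, with the triangle inequality being the real work. Non-negativity is immediate since every ingredient under the outer $1/p$-th root in \eqref{eq:f-g-dist}---the integral $d^{(p)}_{t_1,t_2}(f,g)$, the durations $T_{\text{SFA}}, T_{\text{SMD}}$, the coefficients $c_{\text{SFA}}, c_{\text{SMD}}$, and the dimension factor $r$---is non-negative. For identity, $d_{\boldsymbol{c}}^{(p)}(f,g)=0$ forces both the FA/MD penalty and the cutoff-capped integral to vanish; positivity of $c_{\text{SFA}}, c_{\text{SMD}}$ then yields $T_{\text{SFA}}=T_{\text{SMD}}=0$, making the time supports coincide, while vanishing of $\int\|f-g\|_p\,dt$ with non-negative integrand yields $f\equiv g$ a.e., and the converse is obvious. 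Symmetry is precisely where the hypothesis $c_{\text{SFA}}=c_{\text{SMD}}$ enters: swapping $f\leftrightarrow g$ interchanges SFA and SMD durations, so only with equal weights is $c_{\text{SFA}}T_{\text{SFA}}+c_{\text{SMD}}T_{\text{SMD}}$ invariant; the integrand is symmetric pointwise.

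For the triangle inequality I would view $d_{\boldsymbol{c}}^{(p)}(f,g)$ as the $\ell_p$ norm of the pair $(L(f,g), R(f,g))$, where $L(f,g)\triangleq r^{1/p} c\,\lambda(D_f\triangle D_g)$ is the duration-penalty component (writing $c$ for the common value of $c_{\text{SFA}}, c_{\text{SMD}}$ and $\lambda$ for Lebesgue measure) and $R(f,g)\triangleq\min\bigl(d^{(p)}_{t_1,t_2}(f,g),\,r^{1/p}(2c)(t_2-t_1)\bigr)$ is the cutoff-capped spatial component. Given any third trajectory $h$, the strategy is to establish $L(f,g)\le L(f,h)+L(h,g)$ and $R(f,g)\le R(f,h)+R(h,g)+\Delta$ with any slack $\Delta$ absorbable into the $L$-estimate, then close via the Minkowski-type inequality $\bigl((a_1+a_2)^p+(b_1+b_2)^p\bigr)^{1/p}\le (a_1^p+b_1^p)^{1/p}+(a_2^p+b_2^p)^{1/p}$. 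The $L$-step is standard: the symmetric difference satisfies $D_f\triangle D_g\subseteq(D_f\triangle D_h)\cup(D_h\triangle D_g)$, hence $\lambda(D_f\triangle D_g)\le\lambda(D_f\triangle D_h)+\lambda(D_h\triangle D_g)$.

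The substantive estimate is the $R$-bound, obtained by partitioning the aligned interval $I_{fg}=D_f\cap D_g$ into $A\triangleq I_{fg}\cap D_h$ and $B\triangleq I_{fg}\setminus D_h$. On $A$ the pointwise $\ell_p$ triangle inequality integrates to $\int_A\|f-g\|_p\,dt\le\int_A\|f-h\|_p\,dt+\int_A\|h-g\|_p\,dt$, and since $A\subseteq I_{fh}\cap I_{hg}$ these are dominated by the integrals defining $R(f,h)$ and $R(h,g)$. On $B$ the trajectory $h$ is undefined so no via-$h$ pointwise estimate is available; here the overall cutoff caps the $R(f,g)$-contribution on $B$ by at most $r^{1/p}(2c)\lambda(B)$, and because $B\subseteq D_f\triangle D_h$ and $B\subseteq D_h\triangle D_g$, the length $\lambda(B)$ is already charged inside both $L(f,h)$ and $L(h,g)$. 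The cutoff constant $r^{1/p}(c_{\text{SFA}}+c_{\text{SMD}})$ is in fact chosen precisely to match the per-unit-length cost of declaring a segment jointly SFA/SMD in the via-$h$ decomposition, which is what makes the absorption quantitatively possible.

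The main obstacle, I anticipate, is the case analysis forced by the $\min$ defining $R$: the cutoff may or may not be active independently for each of $R(f,g), R(f,h), R(h,g)$, and the Minkowski combination must be verified in every configuration. The most delicate case is when the integral on $I_{fg}$ saturates its cutoff while the via-$h$ integrals do not, so that the straightforward pointwise chain is unavailable and one must instead trade $R$-contribution on $B$ against $L$-room using the per-unit-length matching above. Organizing this book-keeping so that nothing is double-counted, and verifying the Minkowski step quantitatively, constitutes the technical heart of the proof.
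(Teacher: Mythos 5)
Your proposal is correct in its key claims and rests on the same two pillars as the paper's own argument: (i) split $d_{\boldsymbol{c}}^{(p)}(f,g)$ into a duration-penalty component and a cutoff-capped integral component and recombine via the Minkowski-type inequality (this is exactly the paper's Proposition~\ref{Proposition-Apped-A}), and (ii) exploit the fact that the cap $r^{1/p}(c_{\text{SFA}}+c_{\text{SMD}})(t_2-t_1)$ is priced so that the portion of the $f$--$g$ overlap not covered by $h$ can be charged to the SFA/SMD terms of $d(f,h)$ and $d(h,g)$. Where you genuinely differ is in the temporal bookkeeping: the paper assumes each T-FoT lives on an interval, enumerates six cases by the ordering of the start/end times of $f,g,h$, and writes out only Case~1 explicitly; you instead work with $\lambda(D_f\triangle D_g)\le\lambda(D_f\triangle D_h)+\lambda(D_h\triangle D_g)$ and the partition $I_{fg}=A\cup B$ with $B=I_{fg}\setminus D_h$, which treats all endpoint configurations at once and does not even require the supports to be intervals. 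Your absorption step is quantitatively sound: pointwise one has $\mathbf{1}_{D_f\triangle D_g}+2\cdot\mathbf{1}_{B}\le\mathbf{1}_{D_f\triangle D_h}+\mathbf{1}_{D_h\triangle D_g}$ (a point of $B$ lies in both $D_f\triangle D_h$ and $D_h\triangle D_g$ but not in $D_f\triangle D_g$), which after integrating against $r^{1/p}c$ is exactly the slack $r^{1/p}(2c)\lambda(B)$ you need on the $R$-side. What remains open in your plan --- the case analysis over which branch of the $\min$ is active in each of $R(f,g)$, $R(f,h)$, $R(h,g)$, and the verification that the traded slack can be redistributed between the two coordinates before applying Minkowski --- is real work, but the paper is no more complete on this point: it executes only one of its six cases and leaves the analogous branch bookkeeping implicit. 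On balance your route is somewhat more general and avoids the combinatorial explosion of cases; the paper's route is more concrete and makes the role of the interval endpoints visible.
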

\begin{proof}
Following Definition \ref{def:distance}, the proof is given in Appendix \ref{sec:proof-d(f,g)}. 
\end{proof}

\begin{remark}
When there are multiple trajectories in either the estimated or actual T-FoT set, the T2T association is needed to calculate the distance. Different association results lead to different $T_\text{TFA},T_\text{TMD},T_\text{SFA}$ and $T_\text{SMD}$. 
Therefore, we hereafter write these parameters more formally by $T^{\theta}_\text{TFA},T^{\theta}_\text{TMD},T^{\theta}_\text{SFA},T^{\theta}_\text{SMD}$, where $\theta$ denotes the global T2T association between the estimated and actual T-FoT sets. 
Once the pairwise T-FoT distance is properly defined like in \eqref{eq:f-g-dist}, one can solve the T2T association problem between two sets of T-FoTs by using methods such as the known Hungarian/Kuhn-Munkres algorithm \cite{Munkres1957}. 
\end{remark}

\section{Proposed Metric: Star-ID} \label{sec:Star-ID}


Integrating the pairwise distance between two T-FoTs with the FA and MD errors, we proceed to give the final multi-target T-FoT metric and interpret all user-specific parameters. 
\subsection{Main Result}
Without loss of generality, let $\mathcal{F} =\{{{f}_{1}},{{f}_{2}},\ldots ,{{f}_{ m}}\}$ and $ \mathcal{G} =\{{{g}_{1}},{{g}_{2}},\ldots ,{{g}_{n}}\}$ be the real and estimated T-FoT sets. Note that in online use of the metric, one may consider merely the part of these T-FoTs in the sliding time-window used for T-FoT estimation. Furthermore, 
let us use $\Theta $ to represent the set of all possible T2T associations. If $f \in \mathcal{F}$ is associated with $g \in \mathcal{G}$ in the global T2T association $\theta \in \Theta$, we define that $\mathbf{1}_{\theta}\left(f, g \right) =1 $. Otherwise, $\mathbf{1}_{\theta}\left(f, g \right) =0 $.  
The proposed Star-ID metric is formally defined as follows
\begin{align}
{d}_{\boldsymbol{c}}^{(p)}( \mathcal{F},\mathcal{G} ) &=\underset{\theta \in \Theta}{\min}\bigg( \sum_{i=1}^n{
r\left( c_{\text{TFA}} T^{\theta}_{\text{TFA,}i} \right) ^p} 
+ \sum_{j=1}^m{
r\left( c_{\text{TMD}} T^{\theta}_{\text{TMD,}j} \right) ^p} \nonumber  \\
&~~~~+\sum_{j=1}^m\sum_{i=1}^n{
\mathbf{1}_{\theta}\left(f_j, g_i \right)
 \Big( \check{d}_{\boldsymbol{c}}^{(p)}\left(  f_j ,g_i \right) \Big)^p} \bigg)^{{1}/{p}\;} \label{eq:Star-ID-def} \\
s.t. ~~ & \sum_{j=1}^m \mathbf{1}_{\theta}\left( f_j ,g_i  \right) \leq 1, \forall 1\leq i\leq n \label{eq:f-g-1}, \\
&  \sum_{i=1}^n \mathbf{1}_{\theta}\left( f_j ,g_i  \right) \leq 1 , \forall 1\leq j\leq m . \label{eq:f-g-2}
\end{align}
where 
\begin{align}\label{d_cpfg_m}
&\check{d}_{\boldsymbol{c}}^{(p)}\left(f,g\right)\nonumber  \\
&=\min \left(d_{\boldsymbol{c}}^{(p)}\left(  f ,g \right),\big( r\left( c_{\text{TFA}} T_\text{g} \right) ^p+r\left( c_{\text{TMD}} T_\text{f} \right) ^p \big) ^{\frac{1}{p}}\right).
\end{align}
Here, \eqref{eq:f-g-1} and \eqref{eq:f-g-2} ensure injective T2T association. 

\begin{theorem} \label{theorem_d(F,G)}
When $c_{\text{SFA}}=c_{\text{SMD}} >0, c_{\text{TFA}}=c_{\text{TMD}} >0 $ and $1\le p< \infty$, ${d}_{\boldsymbol{c}}^{(p)}( \mathcal{F},\mathcal{G} )$ is a distance. 
\end{theorem}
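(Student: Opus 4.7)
The strategy is to verify the four properties of Definition \ref{def:distance} in turn, leveraging Theorem \ref{theorem_d(f,g)} which establishes these properties for the pairwise base distance $d_{\boldsymbol{c}}^{(p)}(f,g)$, and hence also for its truncated version $\check{d}_{\boldsymbol{c}}^{(p)}(f,g)$ in \eqref{d_cpfg_m} (a minimum of two terms that vanish iff the underlying T-FoTs coincide). Non-negativity is immediate because \eqref{eq:Star-ID-def} is a $p$-th root of a sum of non-negative $p$-th powers. For identity, if $\mathcal{F}=\mathcal{G}$, the T2T association that pairs each $f_i$ with its identical copy in $\mathcal{G}$ yields zero TFA/TMD durations and zero pairwise distances, so ${d}_{\boldsymbol{c}}^{(p)}(\mathcal{F},\mathcal{G})=0$; conversely, if the distance vanishes, the optimal $\theta$ must assign each element of the smaller set and force $m=n$, $T^{\theta}_{\text{TFA},i}=T^{\theta}_{\text{TMD},j}=0$ and $\check{d}_{\boldsymbol{c}}^{(p)}(f_j,g_{\pi(j)})=0$ for all $j$, which by Theorem \ref{theorem_d(f,g)} implies pointwise equality of the paired T-FoTs, i.e.\ $\mathcal{F}=\mathcal{G}$ as sets. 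Symmetry follows because swapping $\mathcal{F}\leftrightarrow\mathcal{G}$ exchanges the roles of TFA/TMD and SFA/SMD terms, and the assumption $c_{\text{TFA}}=c_{\text{TMD}}$, $c_{\text{SFA}}=c_{\text{SMD}}$ together with the symmetry of $\check{d}_{\boldsymbol{c}}^{(p)}$ makes the formula invariant.

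The substantive content is the triangle inequality
\[
{d}_{\boldsymbol{c}}^{(p)}(\mathcal{F},\mathcal{G}) \le {d}_{\boldsymbol{c}}^{(p)}(\mathcal{F},\mathcal{H}) + {d}_{\boldsymbol{c}}^{(p)}(\mathcal{H},\mathcal{G}).
\]
The plan is to mimic the classical argument used for OSPA/GOSPA: let $\theta_1$ and $\theta_2$ be the optimal T2T associations achieving ${d}_{\boldsymbol{c}}^{(p)}(\mathcal{F},\mathcal{H})$ and ${d}_{\boldsymbol{c}}^{(p)}(\mathcal{H},\mathcal{G})$ respectively, and construct the composite association $\theta$ on $\mathcal{F}\times\mathcal{G}$ by declaring $f_j \sim g_i$ iff there exists some $h\in\mathcal{H}$ with $f_j \sim_{\theta_1} h$ and $h \sim_{\theta_2} g_i$. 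Since ${d}_{\boldsymbol{c}}^{(p)}(\mathcal{F},\mathcal{G})$ is a minimum over all admissible $\theta$, the value produced by this composite $\theta$ is an upper bound on the left-hand side.

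The key step is then to bound each term in that upper bound using the two optimal values on the right. For every $(f_j,g_i)$ paired under $\theta$ via an intermediate $h\in\mathcal{H}$, Theorem \ref{theorem_d(f,g)} (applied to single T-FoTs) plus the monotone behaviour of the min-truncation in \eqref{d_cpfg_m} gives
\[
\check{d}_{\boldsymbol{c}}^{(p)}(f_j,g_i) \le \check{d}_{\boldsymbol{c}}^{(p)}(f_j,h)+\check{d}_{\boldsymbol{c}}^{(p)}(h,g_i).
\]
For the unpaired trajectories, a case analysis is needed: an $f_j$ that is a TMD under $\theta$ is either (a) already a TMD under $\theta_1$, in which case its penalty is absorbed by the $(c_{\text{TMD}}T^{\theta_1}_{\text{TMD},j})^p$ term of ${d}_{\boldsymbol{c}}^{(p)}(\mathcal{F},\mathcal{H})$, or (b) paired with some $h$ under $\theta_1$ which is in turn unpaired under $\theta_2$, in which case one bounds the composite TMD duration $T^{\theta}_{\text{TMD},j}$ by combining the paired distance $\check{d}_{\boldsymbol{c}}^{(p)}(f_j,h)$ (which already caps at the ``full TMD+TFA'' cost by \eqref{d_cpfg_m}) with the TFA penalty of $h$ under $\theta_2$. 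A symmetric analysis handles TFA elements of $\mathcal{G}$. Summing all contributions and applying Minkowski's inequality in $\ell_p$ to the $p$-power sum collapses the bound to ${d}_{\boldsymbol{c}}^{(p)}(\mathcal{F},\mathcal{H}) + {d}_{\boldsymbol{c})}^{(p)}(\mathcal{H},\mathcal{G})$.

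The main obstacle is the bookkeeping of the mixed case (b) above: the capping constant $r(c_{\text{TFA}}T_\text{g})^p+r(c_{\text{TMD}}T_\text{f})^p$ in the definition \eqref{d_cpfg_m} is precisely what makes it possible to reinterpret a ``paired'' distance as an equivalent ``TFA $+$ TMD'' cost when the composition breaks a chain; verifying that the algebraic bounds line up across all chain-breaking patterns (and that the equality $c_{\text{TFA}}=c_{\text{TMD}}$, $c_{\text{SFA}}=c_{\text{SMD}}$ is indispensable for this step) is where the technical work lies. Once that case analysis is complete, Minkowski closes the argument.
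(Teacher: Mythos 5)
Your overall strategy coincides with the paper's: the first three properties are dispatched exactly as in Appendix~C, and for the triangle inequality you compose the two optimal T2T associations, invoke the pairwise triangle inequality of Theorem~\ref{theorem_d(f,g)} on intact chains, use the cap in \eqref{d_cpfg_m} to convert broken chains into TFA/TMD penalties, and close with Minkowski (the paper's Proposition~\ref{Proposition-Apped-A}). The only organizational difference is that the paper enumerates three cases by the cardinality ordering of $m$, $n$ and $l$, whereas you enumerate per-trajectory chain-breaking patterns; these are equivalent bookkeeping schemes.

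There is, however, one step you assert as immediate that is not: the claim that Theorem~\ref{theorem_d(f,g)} ``plus the monotone behaviour of the min-truncation'' yields $\check{d}_{\boldsymbol{c}}^{(p)}(f_j,g_i) \le \check{d}_{\boldsymbol{c}}^{(p)}(f_j,h)+\check{d}_{\boldsymbol{c}}^{(p)}(h,g_i)$. The truncation threshold in \eqref{d_cpfg_m} is not a constant: it depends on the durations $T_f$ and $T_g$ of the two arguments, and a pointwise minimum of a metric with an argument-dependent cap does not automatically inherit the triangle inequality. Concretely, if the cap is attained on the leg $(f,h)$ but not on $(h,g)$, closing the inequality requires comparing the $c_{\text{T}}$-weighted difference of durations $|T_g-T_h|$ against the $c_{\text{S}}$-weighted unaligned-segment penalty inside $d_{\boldsymbol{c}}^{(p)}(h,g)$, which only goes through when $c_{\text{S}}\ge c_{\text{T}}$ --- the opposite of the regime $c_{\text{T}}\ge c_{\text{S}}$ that the paper recommends, and a condition not present in the theorem's hypotheses. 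The paper sidesteps this by never invoking a triangle inequality for $\check{d}_{\boldsymbol{c}}^{(p)}$ on intact chains: it bounds $\check{d}_{\boldsymbol{c}}^{(p)}\le d_{\boldsymbol{c}}^{(p)}$ and applies Theorem~\ref{theorem_d(f,g)} to the untruncated distance, reserving the cap of \eqref{d_cpfg_m} exclusively for the cardinality-mismatch terms (see \eqref{eq:Case-3-prof}), where it is absorbed into the TFA/TMD sums. You should either restructure your intact-chain step in the same way or supply a separate proof of the truncated triangle inequality under whatever additional hypotheses it requires; as written, this link in your chain is a genuine gap.
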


\begin{proof}
Similar to the proof of Theorem \ref{theorem_d(f,g)}, the proof of the four properties of ${d}_{\boldsymbol{c}}^{(p)}( \mathcal{F},\mathcal{G})$ 
is given in Appendix \ref{sec:proofStar-ID}. 
\end{proof}


\begin{figure}[htbp]
\centerline{\includegraphics[width=0.9\linewidth]{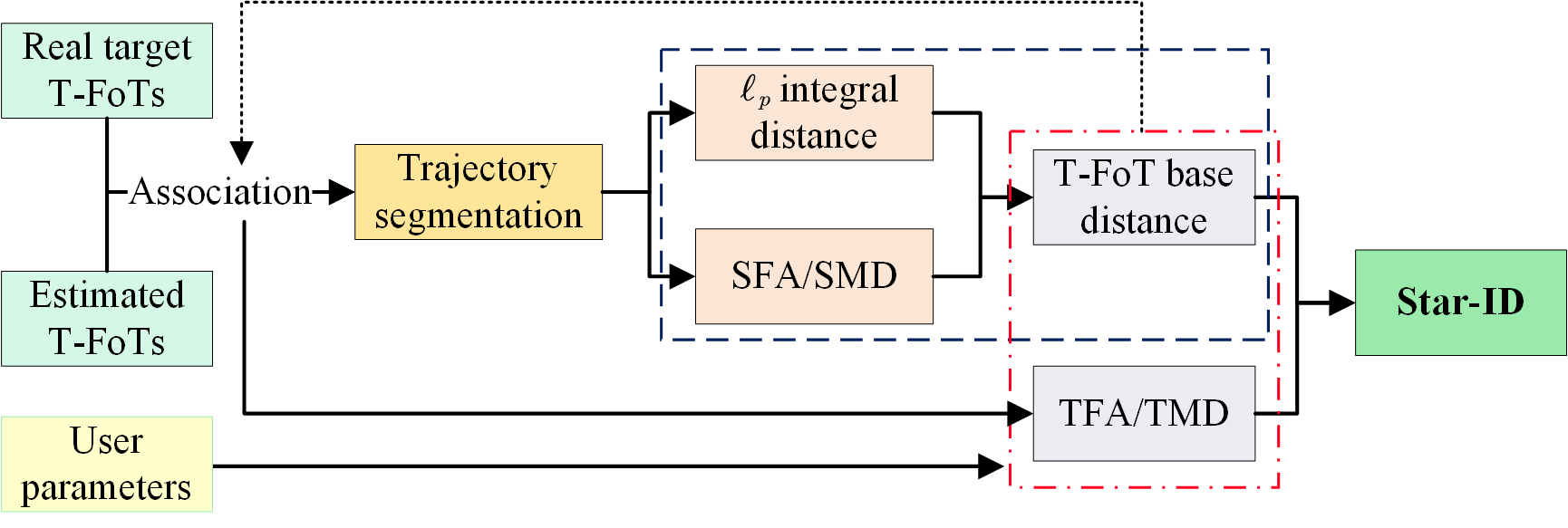}}
\caption{Flowchart for calculating the Star-ID metric.}
\end{figure}

\begin{remark}
    $c_{\text{SFA}}=c_{\text{SMD}}>0$ and $c_{\text{TFA}}=c_{\text{TMD}}>0$ are necessary for the Star-ID to be a distance. However, in practice one may use different penalties for FA and MD although this will go against with the definition of a distance.  
\end{remark}

As shown in \eqref{eq:Star-ID-def}, the Star-ID calculates the distance between two T-FoT sets, summing up the distances between all associated trajectories and the FA and MD errors over all time. Based on this measure, we further define the time-averaged Star-ID (of spatial measure unit) 
as follows
\begin{definition}[Time-averaged Star-ID]
\begin{equation}
    \bar{d}_{\boldsymbol{c}}^{(p)}( \mathcal{F},\mathcal{G} )  \triangleq \frac{1}{k-k'} {d}_{\boldsymbol{c}}^{(p)}( \mathcal{F},\mathcal{G} ) 
\end{equation}
where $k'$ and $k$ denote 
the starting and ending time of a specified time-window for evaluation, respectively. 
\end{definition}

\begin{remark}
In comparing with the normalizatied metrics such as OSPA and OSPA$^{(2)}$, the proposed Star-ID like the GOSPA \cite{rahmathullah2017generalized, Angel2020GOSPA} is unnormalized and will increase basically with the increase of the size of the T-FoT sets. This is reasonable since the more targets/estimates are involved, the higher cumulative error the tracker is supposed to have. The time-averaged Star-ID (TA-Star-ID) calculates the Star-ID per time unit, which evaluates the increase rate of the Star-ID over time, demonstrating the average instantaneous performance of the tracker. That being said, one may consider dividing the error by the number of targets (or that of the T-FoT estimates) for normalization. This, however, will not work since different FA and MD trajectories/segments may correspond to different durations and the FA/MD penalties depend on their duration. 
As such, FA and MD are no more a simple cardinality mismatch issue. This is a significant difference of the spatio-temporial metric with the spatial metric. 
\end{remark}

\subsection{Interpreting Parameters}

\subsubsection{cut-off parameters}
The segment cutoff coefficients \(c_{\text{SFA}}\) and \(c_{\text{SMD}}\) emphasize the discrepancies per time unit between matched trajectory pairs, whereas the trajectory cutoff coefficients $c_{\text{TFA}}$ and $c_{\text{TMD}}$ are designed to penalize the errors per time unit arising from unmatched trajectories due to incorrect or absent estimates. 
As addressed, $c_{\text{SFA}}=c_{\text{SMD}}>0$ and $c_{\text{TFA}}=c_{\text{TMD}}>0$ are necessary for the Star-ID to be a distance. In this case, we refer to $c_{\text{SFA}} =c_{\text{SMD}} =c_\text{S}$ by default as the segment cutoff coefficient and $c_{\text{TFA}}= c_{\text{TMD}} =c_\text{T}$ by default as the trajectory cutoff coefficient. They are defined in the Euclid state space. 
Let us reconsider actual T-FoT set $\mathcal{F} =\{{{f}_{1}},{{f}_{2}},\ldots ,{{f}_{ m}}\}$, estimated T-FoT set $ \mathcal{G} =\{{{g}_{1}},{{g}_{2}},\ldots ,{{g}_{n}}\}$ and the global T2T association $\theta \in \Theta$. 

Moreover, typically, $c_{\text{TFA}} \geq c_{\text{SFA}}, c_{\text{TMD}}\geq c_{\text{SMD}}$ to ensure that missing a whole trajectory is at least more serious than missing a part of it and wrongly estimating an FA trajectory is at least more serious than obtaining an FA segment. 

\subsubsection{Distance Order}
The role of the distance order $p$ in the Star-ID is similar to that in OSPA, GOSPA and  $\text{OSPA}^{(2)}$. The value of $p$ determines the sensitivity of ${d}_{\boldsymbol{c}}^{(p)}(\cdot,\cdot)$ to outlier estimates. We emphasize here two important choices. By using $p=1$, the Star-ID metric measures a first-order error in which the sum of the integral distance and FA and MD error equals the total metric. In contrast, 
$p = 2$ will panel the outliers more significantly, leading to a smoother distance curve.

\section{Simulation Study} \label{sec:simula}
In the simulation study, 
the proposed Star-ID and TA-Star-ID metrics will be demonstrated basing on the T-FoT approach to tracking in both single and multiple target scenarios. In the former, the ground truth trajectory is generated by using a traditional state space model while in the latter, they are generated directly by curve functions in the Cartesian coordinate. 
For the purpose of comparison, we need to find the T-FoT of the ground truth for calculating the Star-ID and TA-Star-ID in the former scenario while in the latter we need to discretize the continuous curve trajectories into discrete points at each the measuring time instant for calculating the OSPA and OSPA$^{(2)}$. Perfect point-to-trajectory association is assumed in OSPA$^{(2)}$. 
For both Star-ID and TA-Star-ID, the segment/trajectory cutoff coefficients for FA and MD are set equal, namely $c_{\text{SFA}} =c_{\text{SMD}} =c_\text{S}$, $c_{\text{TFA}}= c_{\text{TMD}} =c_\text{T}$. We will test different cutoff parameters in two scenarios. 
The metric order for OSPA, OSPA$^{(2)}$ and Star-ID is chosen the same as $p=2$. Both the Sart-ID and TA-Star-ID at time $k$ are calculated over the sliding time-window $K=[k',k]$ with $k'=\textrm{max}(1,k-10)$. In other words, the trajectories are compared only in the current time-window part. 
We add that the Star-ID and TA-Star-ID metrics have also been used in the companion paper \cite{Li25TFoT-part2} for multi-target T-FoT estimator evaluation.  


\subsection{Single Maneuvering Target}
The maneuvering target tracking scenario was provided in Section 4.2.2 of \cite{Hartikainen13} and as shown in Fig.~\ref{fig:filtering}. 
Two Markov models were assumed for modeling the target movement with sampling step size $0.1$s. The first is given by a single linear Wiener process velocity 
model \cite{Sarkka13book} with insignificant process noise (zero-mean and power spectral density $0.01$). 
The other is given by a combination of it with a nonlinear CT model (using no position and velocity noise but zero-mean Gaussian turn rate noise with covariance $0.15$). 

The bearing observation is made on the position $\left[{p}_{x,k},{p}_{y,k}\right]^{T}$ of the target, which was given by four sensors located 
around the four corners of the scenario. 
The sensors are synchronous, having the same sampling step size $0.1$s. The simulation is carried out for $100$ 
runs, each lasting $100$s. The observation function of sensor $i=1,2,3,4$ is 
\begin{equation} \label{eq:bearing}
y_{k,i}=\text{arctan}\bigg(\frac{p_{y,k}-s_{y,i}}{p_{x,k}-s_{x,i}}\bigg) + v_{k,i} ,
\end{equation}
where $v_{k,i} \sim \mathcal{N}(0,0.0036  \text{rad}^2)$. 
The T-FoT approach was carried out in $x$ and $y$ dimensions individually, with a sliding time-window $K$ no longer than one second. 
The polynomial T-FoT of order $\gamma=1$ is given as 
\begin{equation} 
\begin{cases}
p_{x,t}=c^{(1)}_0+c^{(1)}_1t,\\
p_{y,t}=c^{(2)}_0+c^{(2)}_1t. 
\end{cases}
\end{equation}
Given that the four sensors are of the same quality which is time-invariant (here we do not really need to know the statistics of $v_{k,i}$), the joint optimization parameters at time $k$ are estimated by 
\begin{equation} 
\underset{c^{(1)}_0,c^{(1)}_1,c^{(2)}_0,c^{(2)}_1}{\text{argmin}} \sum_{t=k'}^{k} \sum_{i=1}^4 \bigg(y_{t,i}-\text{arctan}\bigg(\frac{c^{(2)}_0+c^{(2)}_1t-s_{y,i}}{c^{(1)}_0+c^{(1)}_1t-s_{x,i}}\bigg)\bigg)^2 . \nonumber 
\end{equation}


\begin{figure}[htbp]
\centerline{\includegraphics[width=0.75\columnwidth]{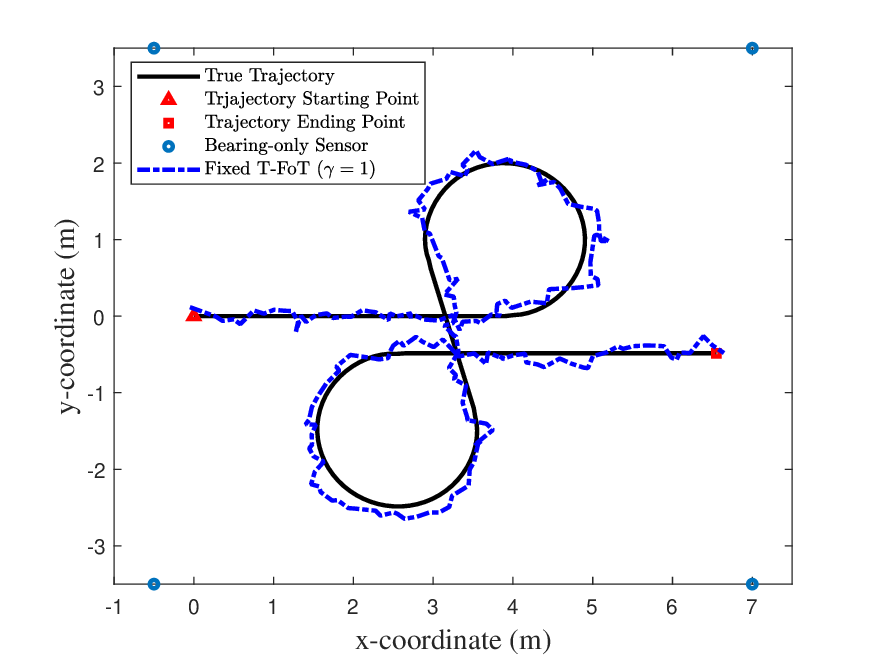}}
\caption{The real trajectory of a maneuvering target and estimates in one rial. 
}
\label{fig:filtering}
\end{figure}

\begin{figure}[htbp]
\centerline{\includegraphics[width=0.75\linewidth]{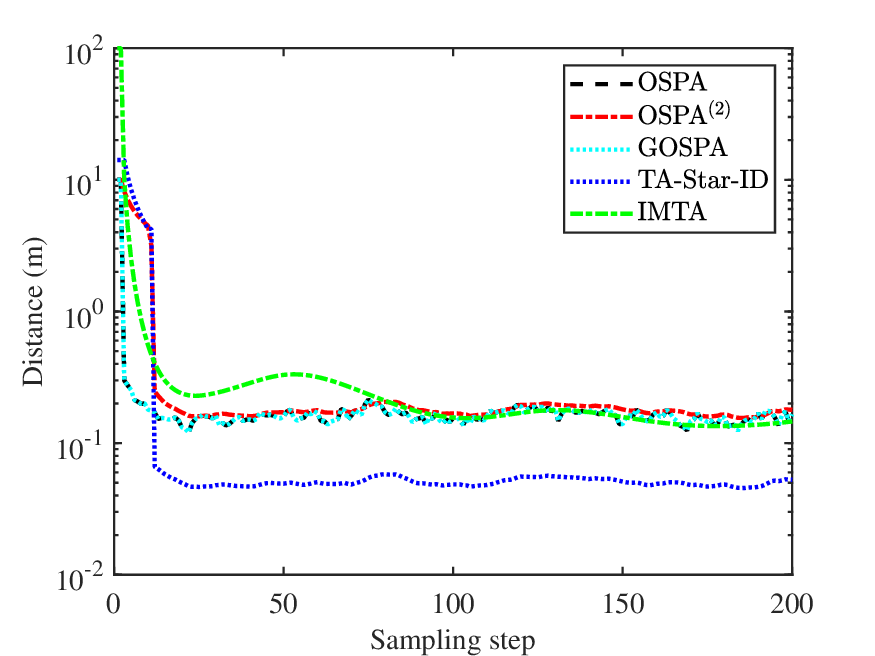}}
\caption{OSPA, $\text{OSPA}^{(2)}$, TA-Star-ID and IMTA distance of the same T-FoT.}
\label{fig:ospa_ospa2_star_id}
\end{figure}

\begin{table}[htbp]
  \centering
  \caption{Metric Parameters used in the single target scenario\label{tab:para_set}}
    \begin{tabular}{ c | c |c}
    \hline\noalign{\smallskip}
    Parameters & Value Used &  Metrics \bigstrut\\ \noalign{\smallskip}\hline
    \noalign{\smallskip}
    $c$          & 10m        & \multirow{2}[4]{*}{OSPA, $\text{OSPA}^{(2)}$, GOSPA} \bigstrut\\
     \noalign{\smallskip}
\cline{1-2} \noalign{\smallskip}   $p$          & 2          &  \bigstrut\\
\noalign{\smallskip}\hline\noalign{\smallskip}
    $\alpha $     & 2          & GOSPA \bigstrut\\
     \noalign{\smallskip}\hline
     \noalign{\smallskip}
  $c_\text{S}$        & 10m        & \multirow{2}[4]{*}{IMTA, (TA-)Star-ID} \bigstrut\\
   \noalign{\smallskip}
\cline{1-2} \noalign{\smallskip}  $c_\text{T}$         & 10m        &  \bigstrut\\
    \noalign{\smallskip}\hline 
    \end{tabular}%
  \label{tab:addlabel}%
\end{table}%

The T-FoT needs at least $\gamma +1$ data points to start the fitting. Therefore, it cannot produce any estimate in the first sampling instant when $\gamma=1$. Our fitting was performed from the second sampling instant and the first instant is treated as an MD. 
The simulation was performed for 100 runs, 200 sampling steps per run, based on the same trajectory (as shown in Fig.~\ref{fig:filtering}) but different observation realizations. 

Based on the parameters given in Table \ref{tab:para_set}, the average OSPA, OSPA$^{(2)}$, GOSPA, IMTA and TA-Star-ID over time are given in Fig. \ref{fig:ospa_ospa2_star_id}.  First, the OSPA and GOSPA metrics demonstrate complete consistency, as the T-FoT accurately estimates the number of targets, i.e.,  $n-m=0$ in both Eq. \eqref{eq:ospa} and \eqref{eq:Gospa}.
The fluctuation in the TA-Star-ID over time complies with $\text{OSPA}^{(2)}$, but is smaller and smoother than the OSPA over time; the IMTA is the smoothest over time. This is simply because both the TA-Star-ID and $\text{OSPA}^{(2)}$, as well as the IMTA, account for the trajectory-segment in the time-window rather than only the estimates at the newest time. 

\subsection{Multiple Target In Presence of FA and MD}

In this scenario, four targets moved with constant velocity in the area \([-3,9] \text{km} \times [-9,15]\)km, as shown in Fig. \ref{fig:multi_scene}. The starting and stopping positions for each trajectory are labeled with triangles and boxes, respectively. 
The existing intervals for their trajectories are \( [1,100]\)s, \( [10,75]\)s, \( [1,90]\)s and \( [5,85]\)s, respectively. 
As shown, the estimated T-FoTs are given by three polynomials. Target 4 that is assumed with very low detection probability is totally missed in all time. 

The simulation is carried out for $100$ Monte Carlo runs, each lasting $100$s. Each target trajectory is modeled by a polynomial T-FoT with fixed order $\gamma=2$ as follows 
\begin{equation}
F(t;\mathbf{C}) 
=\left[ {\begin{array}{*{20}{c}}
{c_0^{(1)},c_1^{(1)},c_2^{(1)}}\\
{c_0^{(2)},c_1^{(2)},c_2^{(2)}}
\end{array}} \right]\left[ {\begin{array}{*{20}{c}}
1\\
t\\
{{t^2}}
\end{array}} \right] . 
\end{equation}


The measurements are made on the position T-FoT 
with measuring step size $1$s and with additive, zero-mean Gaussian measurement noise as follows 
\begin{equation}
\mathbf{y}_k = 
F(k;\mathbf{C})
+ \mathbf{v}_k
\end{equation}
where $\mathbf{v}_k \sim   \mathcal{N}(\mathbf{0}, 10000\text{m}^2)$. 

\begin{figure}[htbp]
\centerline{\includegraphics[width=0.75\columnwidth]{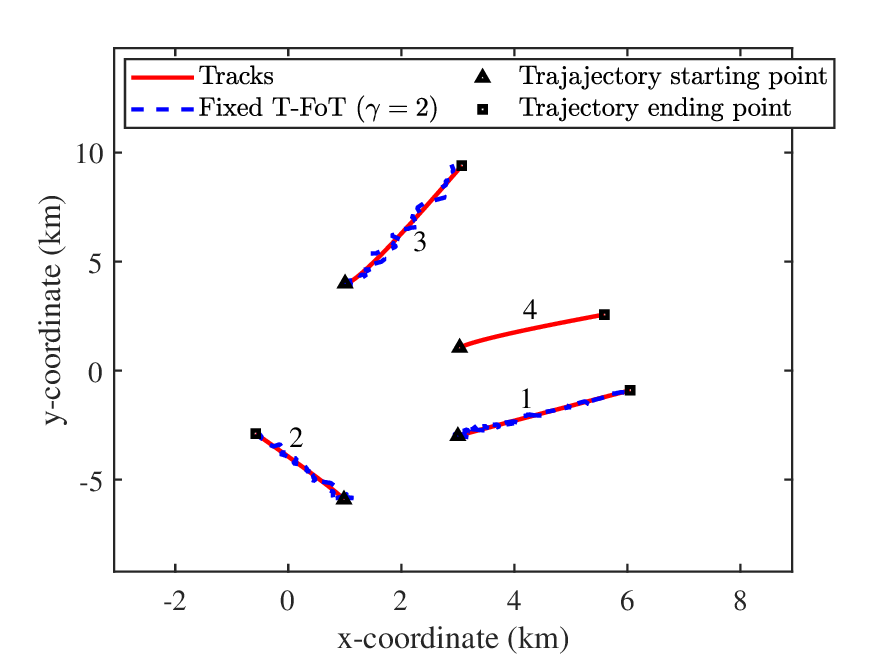}}
\caption{4 Targets move with constant velocity in the Cartesian coordinate.}
\label{fig:multi_scene}
\end{figure}

We first set fixed $c_\text{T}=1$km and different $c_\text{S}$ ($500$m, $1$km, $1.5$km and $2$km) to calculate the TA-Star-ID and Star-ID. 
As illustrated in Fig. \ref{fig:SFASMD} (a), the Star-ID increases with the increase of $c_\text{S}$ over all when there is any segment MD or FA in the concerning time-window. 
Furthermore, 
both the Star-ID and the TA-Star-ID increase when new targets appear and decrease with their disappearance as they are basically cumulative distances summing up the errors regarding all target estimates. 


We next fix $c_\text{S}=1$km and set different $c_\text{T}$ from $500$m to $2$km. As shown in Fig. \ref{fig:TMDTFA}, the TA-Star-ID and Star-ID consistently increase with the increase of $c_\text{T}$ as long as there is TFA or TMD. 

To conclude, it is confirmed that whatever $c_\text{S}$ and $c_\text{T}$ are, the Star-ID, but not the TA-Star-ID, basically increase with the growth of the sliding time-window at the initial stage since it is a distance integrating the errors in the whole time-window. In contrast, the TA-Star-ID shows the average T-FoT estimation error for each sampling period in the time-window, showing an average performance.   

\begin{figure*}[htbp]
\centerline{\includegraphics[width=15cm,height=5.5cm]{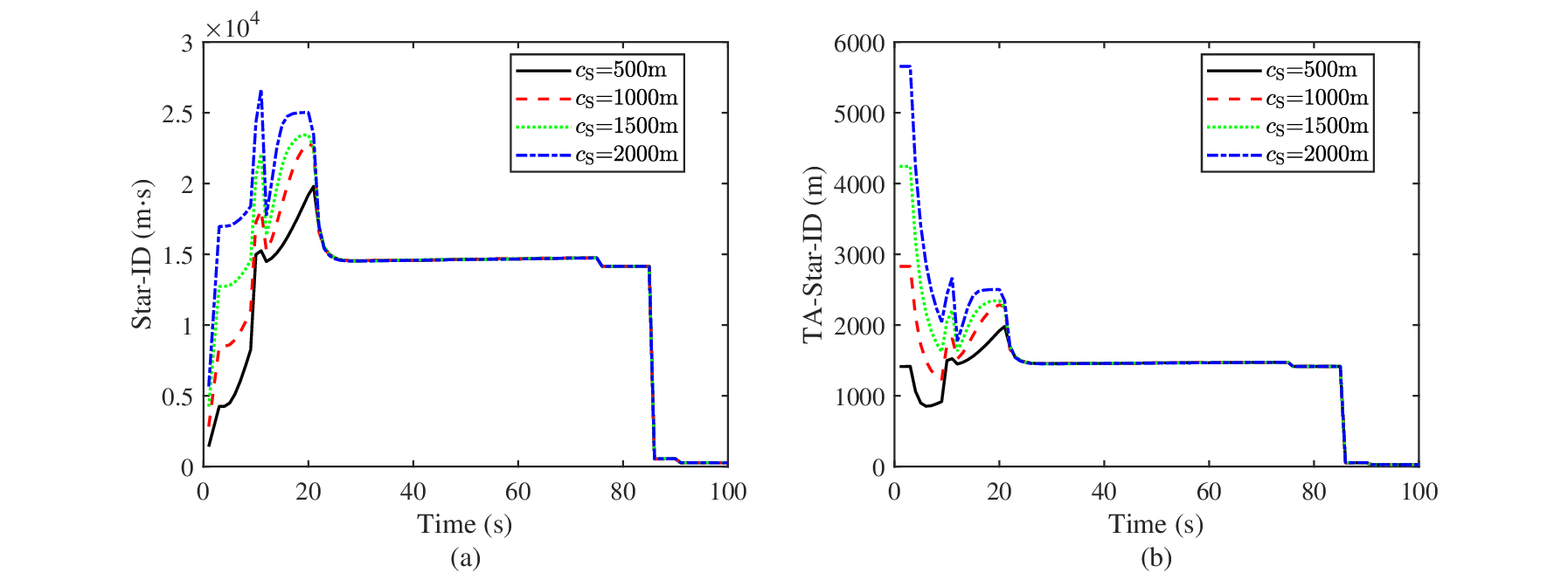}}
\caption{The effect of $c_\text{S}$ on the Star-ID (a) and TA-Star-ID (b), respectively ($c_\text{T}=1$km).}
\label{fig:SFASMD}
\end{figure*}

\begin{figure*}[htbp]
\centerline{\includegraphics[width=15cm,height=5.5cm]{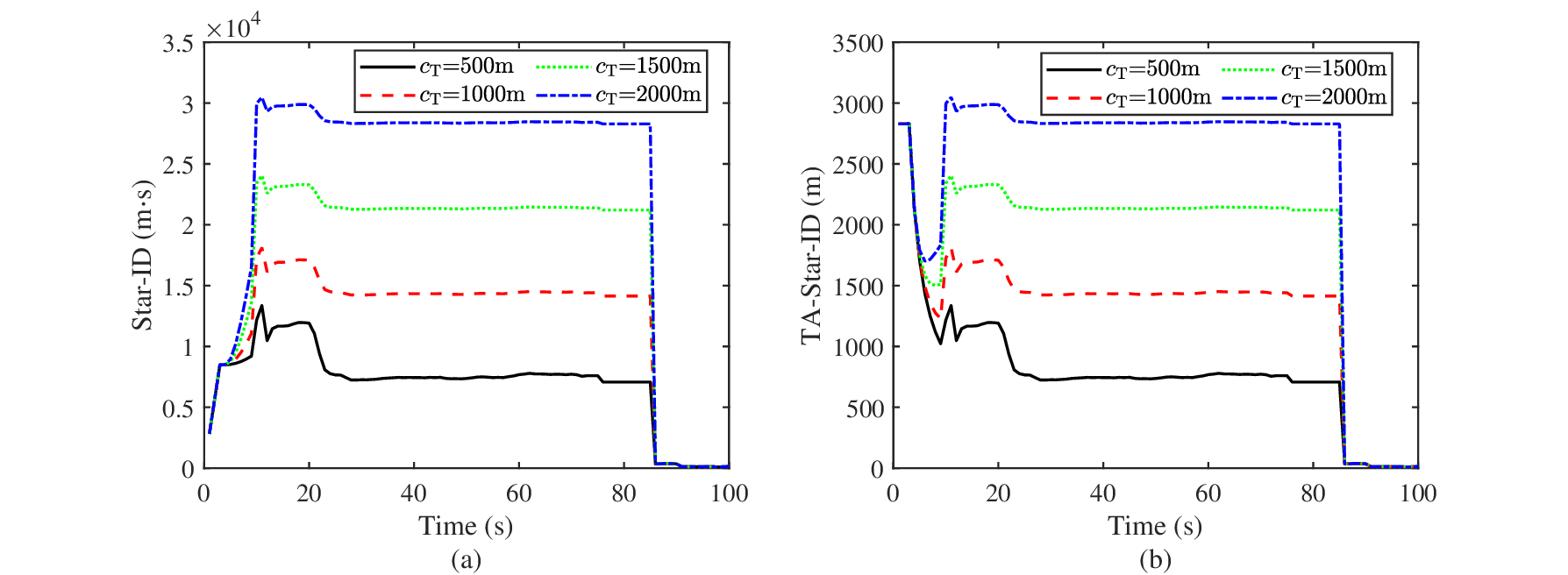}}
\caption{The effect of $c_\text{T}$ on the Star-ID (a) and TA-Star-ID (b), respectively ($c_\text{S}=1$km).}
\label{fig:TMDTFA}
\end{figure*}


\section{Conclusion} \label{sec:con}
In this paper, we propose a metric for calculating the similarity between two arbitrary sets of continuous-time trajectories defined in the spatio-temporal domain, termed as the Star-ID. 
It associates and aligns the estimated and real trajectories in the spatio-temporal domain and distinguishes between the aligned and unaligned segments in calculating either localization integral distance or FA/MD penalties. 
Particularly, the time length is taken into account in the FA and MD penalties.
The trajectory association depends on the Star-ID between two trajectories, which as well as the FA and MD cutoff coefficients is the prerequisite for calculating the Star-ID between two sets of multiple trajectories. 
Conditions that ensue the metric be a distance are specified and proven. Simulation in both single and multiple target tracking scenarios has demonstrated the effectiveness of the Star-ID and its time-averaged version. 
A valuable extension of the metric is to take into account the uncertainty or the existing probability of the estimated trajectories which may lead to better performance in both the tracklet association and distance calculation. 




\section*{Appendix}
\subsection[A]{Minkowski's inequality for T-FoT Sets} 
\begin{proposition}\label{Proposition-Apped-A}
Let $d^{(\alpha)}(f, g)$ and $d^{(\beta)}(f, g)$ denote two different metrics for measuring the distance between trajectories $f(t)$ and $g(t)$. Given that both metrics satisfy the triangle inequality, i.e., $d^{(\alpha)}(f, g)\le d^{(\alpha)}(f, h)+d^{(\alpha)}(h, g), d^{(\beta)}(f, g)\le d^{(\beta)}(f, h)+d^{(\beta)}(h, g)$ for any trajectory $h(t)$, the new metric $d^{(\alpha,\beta)}(f,g)$ defined as, $1\le p< \infty,$ 
$$
d^{(\alpha,\beta)}(f,g) \triangleq \Big(d^{(\alpha)}(f, g)^p+d^{(\beta)}(f, g)^p\Big)^{\frac{1}{p}}
$$
satisfies the triangle inequality:
\begin{equation}
  d^{(\alpha,\beta)}(f,g)\le d^{(\alpha,\beta)}(f,h)+d^{(\alpha,\beta)}(h,g)
\end{equation} \label{eq23}
\end{proposition}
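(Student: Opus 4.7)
The plan is to recognise $d^{(\alpha,\beta)}(f,g)$ as the $\ell^p$-norm, in $\mathbb{R}^2$, of the non-negative vector $\bigl(d^{(\alpha)}(f,g),\, d^{(\beta)}(f,g)\bigr)$, and then invoke the classical Minkowski inequality together with the coordinatewise monotonicity of the $\ell^p$-norm on the non-negative orthant. Once this reformulation is made, the triangle inequality for $d^{(\alpha,\beta)}$ follows in two short steps from the triangle inequalities assumed for $d^{(\alpha)}$ and $d^{(\beta)}$.

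Concretely, I would introduce the auxiliary vectors
\begin{equation*}
\mathbf{u} \triangleq \bigl(d^{(\alpha)}(f,h),\, d^{(\beta)}(f,h)\bigr),\qquad
\mathbf{v} \triangleq \bigl(d^{(\alpha)}(h,g),\, d^{(\beta)}(h,g)\bigr),
\end{equation*}
and
$\mathbf{w} \triangleq \bigl(d^{(\alpha)}(f,g),\, d^{(\beta)}(f,g)\bigr),$
all of which lie in $\mathbb{R}^2_{\ge 0}$. With this notation, $d^{(\alpha,\beta)}(f,g)=\|\mathbf{w}\|_p$, $d^{(\alpha,\beta)}(f,h)=\|\mathbf{u}\|_p$, and $d^{(\alpha,\beta)}(h,g)=\|\mathbf{v}\|_p$.

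The proof would then proceed in two moves. First, the triangle inequalities for $d^{(\alpha)}$ and $d^{(\beta)}$ give $\mathbf{w}\le \mathbf{u}+\mathbf{v}$ coordinatewise, and since $x\mapsto x^p$ is monotonically non-decreasing on $[0,\infty)$ for any $p\ge 1$, we deduce $\|\mathbf{w}\|_p\le \|\mathbf{u}+\mathbf{v}\|_p$. Second, the classical Minkowski inequality in $\ell^p(\mathbb{R}^2)$ yields $\|\mathbf{u}+\mathbf{v}\|_p\le \|\mathbf{u}\|_p+\|\mathbf{v}\|_p$, which is exactly the desired
\begin{equation*}
d^{(\alpha,\beta)}(f,g)\le d^{(\alpha,\beta)}(f,h)+d^{(\alpha,\beta)}(h,g).
\end{equation*}

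There is no real obstacle here: the only subtle point worth flagging explicitly is that the monotonicity step in the first move relies on $\mathbf{w},\mathbf{u},\mathbf{v}$ being non-negative, which holds because $d^{(\alpha)}$ and $d^{(\beta)}$ are (by hypothesis or by context) non-negative. The restriction $1\le p<\infty$ is precisely what the classical Minkowski inequality requires, so the hypothesis of the proposition is used exactly where expected.
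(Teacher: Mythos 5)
Your proof is correct and is essentially the same argument as the paper's: both reduce the claim to the classical Minkowski inequality in $\mathbb{R}^2$ combined with coordinatewise monotonicity of the $\ell^p$-norm on non-negative vectors, the only difference being that you read the chain of inequalities from $d^{(\alpha,\beta)}(f,g)$ upward while the paper starts from $d^{(\alpha,\beta)}(f,h)+d^{(\alpha,\beta)}(h,g)$ and works downward. Your explicit remark that the monotonicity step requires non-negativity of the component distances is a worthwhile clarification that the paper leaves implicit.
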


\begin{proof}
The proof is straightforward as follows
\begin{align}
   &d^{(\alpha,\beta)}(f,h)+d^{(\alpha,\beta)}(h,g) \nonumber \\
=&\big(d^{(\alpha)}(f,h)^p+d^{(\beta)}(f,h)^p\big)^{\frac{1}{p}}+\big(d^{(\alpha)}(h,g)^p+d^{(\beta)}(h,g)^p\big)^{\frac{1}{p}} \nonumber \\
\ge & \Big( \big(d^{(\alpha)}(f,h)+ d^{(\alpha)}(h,g)\big)^p+\big(d^{(\beta)}(f,h)+ d^{(\beta)}(h,g)\big)^p \Big)^{\frac{1}{p}} \label{po_3} \\
\ge & \Big( d^{(\alpha)}(f,g)^p+d^{(\beta)}(f,g)^p \Big)^{\frac{1}{p}} \nonumber \\
=& d^{(\alpha,\beta)}(f,g)
\end{align}
where the Minkowski's inequality \footnote{Given two sequences $x=(\xi_1,\dots ,\xi_n)$ and $y=(\nu_1,\dots ,\nu_n)$, the Minkowski's inequality is defined as follows, for  $1\le p< \infty $,
\begin{equation}
\left( \sum_{i=1}^n{\left| \xi _i+\nu_i \right|^p} \right)^{\frac{1}{p}}  \leq \left(\sum_{i=1}^n{\left| \xi _i \right|^p} \right)^{\frac{1}{p}} + \left( \sum_{i=1}^n{\left| \nu_i \right|^p} \right)^{\frac{1}{p}} \nonumber
\end{equation}
} \cite{kubrusly2011elements} was used in \eqref{po_3}. 
\end{proof}

\subsection[B]{Proof of Theorem \ref{theorem_d(f,g)}} \label{sec:proof-d(f,g)}

Firstly, since \( d_{t_1,t_2}^{(p)}\left( f,g\right) \ge 0\), when $c_{\text{SFA}},c_{\text{SMD}} >0 $,  $d_{\boldsymbol{c}}^{(p)}\left( f,g \right)$ satisfies non-negativity. Second, for any \(f \), \(g\), it is easy to see that \(d_{\boldsymbol{c}}^{(p)}\left( f,g \right)=0\) if and only if \(f=g \). Thus, the identity property is satisfied. Third, when $c_{\text{SFA}}=c_{\text{SMD}}$ and $c_{\text{TFA}}=c_{\text{TMD}}$ ,  \(d_{\boldsymbol{c}}^{(p)}( \cdot , \cdot )\) is symmetric in its arguments, satisfying the third property. 

Lastly, we prove the triangle inequality by starting from the following definition $\overline{d}_{t_1,t_2}^{(\boldsymbol{c},p)} (f,g) \triangleq \min \Big(d_{t_1,t_2}^{(p)} (f,g) , r^{\frac{1}{p}}(c_{\text{SFA}}+c_{\text{SMD}})(t_2-t_1)  \Big)$. It can be straightforwardly asserted that
\begin{equation}
 d_{t_1,t_2}^{(\boldsymbol{c},p)} (f,g)  = \left(\overline{d}_{t_1,t_2}^{(\boldsymbol{c},p)} (f,g) \right)^p . 
\end{equation}



Considering three T-FoTs $f$, $g$ and $h$ with time intervals denoted by
$T^{(f)}=[ t_s^{(f)}, t_e^{(f)} ]$, $T^{(g)}=[ t_s^{(g)}, t_e^{(g)} ]$ and $T^{(h)}=[ t_s^{(h)}, t_e^{(h)} ]$, respectively, their joint temporal span of existence is $T^{(f,h,g)}=\left[\min\left(t_s^{(f)}, t_s^{(g)}, t_s^{(h)}\right), \max\left(t_e^{(f)},  t_e^{(g)}, t_e^{(h)}\right)\right]$. Our proof is categorized into the following complementary cases:

Case 1: $t_s^{(h)} < t_s^{(f)} <t_s^{(g)}$, $t_e^{(h)} < t_e^{(f)} <t_e^{(g)}$,
\begin{align}
&d_{\boldsymbol{c}}^{(p)}\left( f,g \right) \nonumber \\
=& \left(r\left(c_{\text{S}}(t_s^{(g)}-t_s^{(f)})+c_{\text{S}}(t_e^{(g)}-t_e^{(f)})\right)^p +d_{t_s^{(g)},t_e^{(f)}}^{(\boldsymbol{c},p)} (f,g)\right)^{\frac{1}{p}} \nonumber\\
\le & \bigg(r\left(c_{\text{S}}(t_s^{(g)}-t_s^{(f)})+c_{\text{S}}(t_e^{(g)}-t_e^{(f)})\right)^p \nonumber \\
&~~~~+\Big[ \overline{d}_{t_s^{(f)},t_e^{(h)}}^{(\boldsymbol{c},p)} (f,g) +r^{\frac{1}{p}}c_{\text{S}}(t_e^{(f)}-t_e^{(h)})\Big]^p\bigg)^{\frac{1}{p}} \label{eq:Append-B-2} \\
\le & \bigg(r\Big(c_{\text{S}}(t_s^{(g)}-t_s^{(f)})+c_{\text{S}}(t_e^{(g)}-t_e^{(f)}) \nonumber \\
&~~~~+2c_{\text{S}}(t_e^{(f)}-t_e^{(h)})\Big)^p+\Big[ \overline{d}_{t_s^{(f)},t_e^{(h)}}^{(\boldsymbol{c},p)} (f,g) \Big]^p\bigg)^{\frac{1}{p}} \nonumber  \\
\le & \bigg(r\Big[ \left(c_{\text{S}}(t_s^{(f)}-t_s^{(h)})+c_{\text{S}}(t_s^{(g)}-t_s^{(h)})\right) \nonumber  \\
&~~~~~~+\left(c_{\text{S}}(t_e^{(f)}-t_e^{(h)})\right)+\left(c_{\text{S}}(t_e^{(g)}-t_e^{(h)})\right)\Big]^p \nonumber  \\
&~+\Big[ \overline{d}_{t_s^{(f)},t_e^{(h)}}^{(\boldsymbol{c},p)} (f,h) +\overline{d}_{t_s^{(g)},t_e^{(h)}}^{(\boldsymbol{c},p)} (h,g) \Big]^p\bigg)^{\frac{1}{p}} \nonumber \\
\le& \bigg( r\left(c_{\text{S}}(t_s^{(f)}-t_s^{(h)})+c_{\text{S}}(t_e^{(f)}-t_e^{(h)})\right)^p +d_{t_s^{(f)},t_e^{(h)}}^{(\boldsymbol{c},p)} (f,h) \bigg)^{\frac{1}{p}} \nonumber  \\
+&\bigg(r\left(c_{\text{S}}(t_s^{(g)}-t_s^{(h)})+c_{\text{S}}(t_e^{(g)}-t_e^{(h)})\right)^p +d_{t_s^{(g)}, t_e^{(h)}}^{(\boldsymbol{c},p)} (h,g) \bigg)^{\frac{1}{p}} \label{eq:Append-B-5}   \\
=&d_{\boldsymbol{c}}^{(p)}\left( f,h \right)+d_{\boldsymbol{c}}^{(p)}\left( h,g \right) \nonumber
\end{align}
where \eqref{dcpt1t2} 
was used in \eqref{eq:Append-B-2} while 
Proposition \ref{Proposition-Apped-A} in \eqref{eq:Append-B-5}.

\begin{align}
&d_{\boldsymbol{c}}^{(p)}\left( f,h \right) \nonumber \\
=& \bigg( r\left(c_{\text{S}}(t_s^{(f)}-t_s^{(h)})+c_{\text{S}}(t_e^{(f)}-t_e^{(h)})\right)^p +d_{t_s^{(f)},t_e^{(h)}}^{(\boldsymbol{c},p)} (f,h) \bigg)^{\frac{1}{p}} \nonumber \\
\le & \bigg(r\left(c_{\text{S}}(t_s^{(f)}-t_s^{(h)})+c_{\text{S}}(t_e^{(f)}-t_e^{(h)})\right)^p  \nonumber\\
&~+\Big[ \overline{d}_{t_s^{(g)},t_e^{(h)}}^{(\boldsymbol{c},p)} (f,h) +r^{\frac{1}{p}}c_{\text{S}}(t_s^{(g)}-t_s^{(f)})\Big]^p\bigg)^{\frac{1}{p}} \label{eq:Append-B-2-1} \\
\le & \bigg(r\Big[ \left(c_{\text{S}}(t_s^{(g)}-t_s^{(f)})+c_{\text{S}}(t_s^{(g)}-t_s^{(h)})\right) \nonumber\\
&~~~~~~+\left(c_{\text{S}}(t_e^{(g)}-t_e^{(h)})\right)+\left(c_{\text{S}}(t_e^{(g)}-t_e^{(f)})\right)\Big]^p \nonumber\\
&~+\Big[ \overline{d}_{t_s^{(g)},t_e^{(f)}}^{(\boldsymbol{c},p)} (f,g) +\overline{d}_{t_s^{(g)},t_e^{(h)}}^{(\boldsymbol{c},p)} (h,g) \Big]^p\bigg)^{\frac{1}{p}} \nonumber\\
\le& \bigg(r\left(c_{\text{S}}(t_s^{(g)}-t_s^{(f)})+c_{\text{S}}(t_e^{(g)}-t_e^{(f)})\right)^p +d_{t_s^{(g)},t_e^{(f)}}^{(\boldsymbol{c},p)} (f,g)\bigg)^{\frac{1}{p}}\nonumber \\
+ & \bigg(r\left(c_{\text{S}}(t_s^{(g)}-t_s^{(h)})+c_{\text{S}}(t_e^{(g)}-t_e^{(h)})\right)^p +d_{t_s^{(g)},t_e^{(h)}}^{(\boldsymbol{c},p)} (h,g)\bigg)^{\frac{1}{p}}  \label{eq:Append-B-2-2}\\
=&d_{\boldsymbol{c}}^{(p)}\left( f,g \right)+d_{\boldsymbol{c}}^{(p)}\left( h,g \right) \nonumber
\end{align}
where \eqref{dcpt1t2} was used in \eqref{eq:Append-B-2-1} while 
Proposition \ref{Proposition-Apped-A} in \eqref{eq:Append-B-2-2}.

\begin{align}
&d_{\boldsymbol{c}}^{(p)}\left( h,g \right)  \nonumber \\
=&\bigg(r\left(c_{\text{S}}(t_s^{(g)}-t_s^{(h)})+c_{\text{S}}(t_e^{(g)}-t_e^{(h)})\right)^p +d_{t_s^{(g)},t_e^{(h)}}^{(\boldsymbol{c},p)} (h,g)\bigg)^{\frac{1}{p}} \nonumber \\
\le & \bigg( r\Big[ \left(c_{\text{S}}(t_s^{(f)}-t_s^{(h)})\right)+\left(c_{\text{S}}(t_s^{(g)}-t_s^{(f)})\right) \nonumber\\
&~~~~~~+\left(c_{\text{S}}(t_e^{(f)}-t_e^{(h)})\right)+\left(c_{\text{S}}(t_e^{(g)}-t_e^{(f)})\right)\Big]^p  \nonumber\\
&~~~+\Big[ \overline{d}_{t_s^{(f)},t_e^{(h)}}^{(\boldsymbol{c},p)} (f,h) +\overline{d}_{t_s^{(g)},t_e^{(f)}}^{(\boldsymbol{c},p)} (f,g) \Big]^p\bigg)^{\frac{1}{p}} \nonumber\\
\le&d_{\boldsymbol{c}}^{(p)}\left( f,h \right)+d_{\boldsymbol{c}}^{(p)}\left( f,g \right)  \label{eq:casef_g} 
\end{align}
where Proposition \ref{Proposition-Apped-A} was used in \eqref{eq:casef_g}.

This proves the triangle inequality for this case. The proof of the triangle inequality for the remaining cases 2-6 is analogous to that for case 1  and is omitted here.

Case 2: $t_s^{(h)} < t_s^{(g)} <t_s^{(f)}$, $t_e^{(h)} < t_e^{(f)} <t_e^{(g)}$

Case 3: $t_s^{(h)} < t_s^{(f)} <t_s^{(g)}$, $t_e^{(f)} < t_e^{(h)} <t_e^{(g)}$.

Case 4: $t_s^{(h)} < t_s^{(g)} <t_s^{(f)}$, $t_e^{(f)} < t_e^{(h)} <t_e^{(g)}$. 

Case 5: $t_s^{(h)} < t_s^{(f)} <t_s^{(g)}$, $t_e^{(f)} < t_e^{(g)} <t_e^{(h)}$. 

Case 6: $t_s^{(h)} < t_s^{(g)} <t_s^{(f)}$, $t_e^{(f)} < t_e^{(g)} <t_e^{(h)}$.

Summing up, we have now proved that the triangle inequality holds in all cases for \(d_{\boldsymbol{c}}^{(p)}\left( f,g \right)\). 

\subsection[C]{Proof of Theorem \ref{theorem_d(F,G)} } 
\label{sec:proofStar-ID}



First, as proven in Theorem \ref{theorem_d(f,g)}, $d_{\boldsymbol{c}}^{(p)}\left( f,g \right)$ satisfies non-negativity while the rest part of ${d}_{\boldsymbol{c}}^{(p)}( \mathcal{F},\mathcal{G} )$ is non-negative. So, the non-negativity property is therefore satisfied. Second, for \(f(t) \), \(g(t) \), since \(d_{\boldsymbol{c}}^{(p)}\left( f,g \right)=0\) if and only if \(f(t)=g(t) \), then \( {d}_{\boldsymbol{c}}^{(p)}( \mathcal{F},\mathcal{G} ) =0 \) if and only if \(\mathcal{F}=\mathcal{G}\). Thus, the identity property is satisfied. Third, when $c_{\text{SFA}}=c_{\text{SMD}}$ and $c_{\text{TFA}}=c_{\text{TMD}}$,  \(d_{\boldsymbol{c}}^{(p)}( \cdot , \cdot )\) is symmetric in its arguments, \({d}_{\boldsymbol{c}}^{(p)}( \mathcal{F},\mathcal{G} )\) is also symmetric in its arguments, satisfying the third property. 
In what follows, we would like to prove the triangle inequality:
\begin{equation} \label{eq:triangle}
{d}_{\boldsymbol{c}}^{(p)}( \mathcal{F},\mathcal{G} )\le {d}_{\boldsymbol{c}}^{(p)}\left( \mathcal{F},\mathcal{H} \right)+{d}_{\boldsymbol{c}}^{(p)}\left( \mathcal{H},\mathcal{G} \right)
\end{equation}

Since this inequality is symmetric, 
we may assume without loss of generality that \(m\leq n \). 
Our proof (given in the next page) addresses three complete and complementary cases differing in the relationship between \(m\), \(n\) and \(l\). Proposition 1 was used in \eqref{eq:Star_ID_case1}, \eqref{eq:Star_ID_case2} and \eqref{eq:Star_ID_case3}. As shown in all these three cases, the triangle inequality \eqref{eq:triangle} holds. 

\begin{table*} 
{\normalsize
Case 1: \(m \leq n\leq l\)
 \begin{align}
{d}_{\boldsymbol{c}}^{(p)}( \mathcal{F},\mathcal{G} ) &\le \bigg( \sum_{i=1}^n{r \left( c_{\text{T}} T^{\theta}_{\text{TFA,}i} \right) ^p}   +\sum_{j=1}^m\sum_{i=1}^n{\mathbf{1}_{\theta}\left( f_j ,g_i  \right)   \Big({d}_{\boldsymbol{c}}^{(p)}\left( f_j ,g_i \right) \Big) ^p} \bigg)^{\frac{1}{p}}   \nonumber \\
 &\le \bigg(  \sum_{i=1}^n{r \left( c_{\text{T}} T^{\theta}_{\text{TFA,}i} \right) ^p} \nonumber  +\sum_{j=1}^m \sum_{i=1}^l\Big[ \mathbf{1}_{\theta} \left( f_j ,h_i  \right) 
\left({d}_{\boldsymbol{c}}^{(p)}\left( f_j ,h_i\right)\right)  +\mathbf{1}_{\theta} \left( g_j ,h_i  \right) 
\left({d}_{\boldsymbol{c}}^{(p)}\left( g_j ,h_i\right)\right)\Big]  ^p \bigg)^{\frac{1}{p}}   \nonumber\\
 &\le \bigg(  \sum_{i=1}^n{r \left( c_{\text{T}} T^{\theta}_{\text{TMD,}i} \right) ^p}+\sum_{i=1}^l{r \left( c_{\text{T}} T^{\theta}_{\text{TFA,}i} \right) ^p}  +\sum_{j=m+1}^n\sum_{i=1}^l{\mathbf{1}_{\theta}\left( g_j ,h_i  \right)   \Big({d}_{\boldsymbol{c}}^{(p)}\left( g_j ,h_i \right)\Big)  ^p}  \nonumber\\
 &~~~~~~ +\sum_{j=1}^m \sum_{i=1}^l\Big[ \mathbf{1}_{\theta} \left( f_j ,h_i  \right) 
\left({d}_{\boldsymbol{c}}^{(p)}\left( f_j ,h_i\right)\right)+\mathbf{1}_{\theta} \left( g_j ,h_i  \right) 
\left({d}_{\boldsymbol{c}}^{(p)}\left( g_j ,h_i\right)\right)\Big]  ^p \bigg)^{\frac{1}{p}}  \nonumber \\
 &\le \bigg(\sum_{i=1}^l{r \left( c_{\text{T}} T^{\theta}_{\text{TFA,}i} \right) ^p}  +\sum_{j=1}^m\sum_{i=1}^l{\mathbf{1}_{\theta}\left( f_j ,h_i  \right)   \Big({d}_{\boldsymbol{c}}^{(p)}\left( f_j ,h_i \right)\Big)  ^p} \bigg)^{\frac{1}{p}}   \nonumber\\
 &~~~~~~+ \bigg( \sum_{i=1}^n{r \left( c_{\text{T}} T^{\theta}_{\text{TMD,}i} \right) ^p} +\sum_{j=1}^n\sum_{i=1}^l{\mathbf{1}_{\theta}\left( g_j ,h_i  \right)   \Big({d}_{\boldsymbol{c}}^{(p)}\left( g_j ,h_i \right) \Big) ^p} \bigg)^{\frac{1}{p}}  \label{eq:Star_ID_case1} \\
&= {d}_{\boldsymbol{c}}^{(p)}\left( \mathcal{F},\mathcal{H} \right)+{d}_{\boldsymbol{c}}^{(p)}\left( \mathcal{G},\mathcal{H} \right)  \nonumber
\end{align}
}

{\normalsize
Case 2: \(m\leq  l\leq n\)
\begin{align}
{d}_{\boldsymbol{c}}^{(p)}( \mathcal{F},\mathcal{G} )  &\le\bigg( \sum_{i=1}^n{ r \left( c_{\text{T}} T^{\theta}_{\text{TFA,}i} \right) ^p}  +\sum_{j=1}^m\sum_{i=1}^n{\mathbf{1}_{\theta}\left( f_j ,g_i  \right)   \Big({d}_{\boldsymbol{c}}^{(p)}\left( f_j ,g_i \right) \Big) ^p}\bigg)^{\frac{1}{p}}  \nonumber\\
&\le \bigg( \sum_{i=1}^n{ r \left( c_{\text{T}} T^{\theta}_{\text{TFA,}i} \right) ^p}+\sum_{i=1}^l{ r \left( c_{\text{T}} T^{\theta}_{\text{TFA,}i} \right) ^p}  +\sum_{j=1}^m \sum_{i=1}^n\Big[ \mathbf{1}_{\theta} \left( f_j ,h_i  \right) 
\left({d}_{\boldsymbol{c}}^{(p)}\left( f_j ,h_i\right)\right) +\mathbf{1}_{\theta} \left( h_j ,g_i  \right) 
\left({d}_{\boldsymbol{c}}^{(p)}\left( h_j ,g_i\right)\right)\Big] ^p  \nonumber\\
&~~~~~~~~+\sum_{j=m+1}^l\sum_{i=1}^n \mathbf{1}_{\theta}\left( h_j ,g_i  \right)  \Big( {d}_{\boldsymbol{c}}^{(p)}\left( h_j ,g_i\right) \Big)^p \bigg)^{\frac{1}{p}}  \nonumber\\
  &\le \bigg( \sum_{i=1}^l{ r \left( c_{\text{T}} T^{\theta}_{\text{TFA,}i} \right) ^p}  +\sum_{j=1}^m\sum_{i=1}^l \mathbf{1}_{\theta}\left( f_j ,h_i  \right)  \Big( {d}_{\boldsymbol{c}}^{(p)}\left( f_j ,h_i\right) \Big) ^p \bigg)^{\frac{1}{p}}   \nonumber \\
 &~~~~~~+\bigg(\sum_{i=1}^n{ r \left( c_{\text{T}} T^{\theta}_{\text{TFA,}i} \right) ^p} +\sum_{j=1}^l\sum_{i=1}^n{\mathbf{1}_{\theta}\left( h_j ,g_i  \right)   \Big( {d}_{\boldsymbol{c}}^{(p)}\left( h_j ,g_i \right) \Big) ^p} \bigg)^{\frac{1}{p}} \label{eq:Star_ID_case2}\\
  &= {d}_{\boldsymbol{c}}^{(p)}\left( \mathcal{F},\mathcal{H} \right)+{d}_{\boldsymbol{c}}^{(p)}\left( \mathcal{G},\mathcal{H} \right)  \nonumber
\end{align}
}

{\normalsize
Case 3: \(l\leq  m\leq n\)
\begin{align}
{d}_{\boldsymbol{c}}^{(p)}( \mathcal{F},\mathcal{G} )&\le\Bigg(\sum_{i=1}^n{r \left( c_{\text{T}} T^{\theta}_{\text{TFA,}i} \right) ^p}   +\sum_{j=1}^m\sum_{i=1}^n{\mathbf{1}_{\theta}\left( f_j ,g_i  \right)   \Big({d}_{\boldsymbol{c}}^{(p)}\left(  f_j ,g_i \right) \Big) ^p}\Bigg)^{\frac{1}{p}} \nonumber  \\
&\le \Bigg( \sum_{i=1}^n{r \left( c_{\text{T}} T^{\theta}_{\text{TFA,}i} \right) ^p} +\sum_{i=1}^m{r \left( c_{\text{T}} T^{\theta}_{\text{TFA,}i} \right) ^p} +\sum_{j=1}^l\sum_{i=1}^n{\mathbf{1}_{\theta}\left( f_j ,g_i  \right)   \Big({d}_{\boldsymbol{c}}^{(p)}\left(  f_j ,g_i\right) \Big) ^p}\Bigg)^{\frac{1}{p}} \label{eq:Case-3-prof}  \\
&\le \Bigg( \sum_{i=1}^n{r \left( c_{\text{T}} T^{\theta}_{\text{TFA,}i} \right) ^p} +\sum_{i=1}^m{r \left( c_{\text{T}} T^{\theta}_{\text{TFA,}i} \right) ^p} +\sum_{j=1}^l\sum_{i=1}^n\Big[\mathbf{1}_{\theta}\left( h_j ,f_i  \right)   \left({d}_{\boldsymbol{c}}^{(p)}\left(  h_j ,f_i\right) \right) +\mathbf{1}_{\theta}\left( h_j ,g_i  \right)   \left({d}_{\boldsymbol{c}}^{(p)}\left(  h_j ,g_i\right) \right)\Big]^p\Bigg)^{\frac{1}{p}}\nonumber \\
  &\le \bigg( \sum_{i=1}^m{ r \left( c_{\text{T}} T^{\theta}_{\text{TFA,}i} \right) ^p}  +\sum_{j=1}^l\sum_{i=1}^m \mathbf{1}_{\theta}\left( h_j ,f_i  \right)  \Big( {d}_{\boldsymbol{c}}^{(p)}\left( h_j ,f_i\right) \Big) ^p \bigg)^{\frac{1}{p}}   \nonumber \\
 &~~~~~~+\bigg(\sum_{i=1}^n{ r \left( c_{\text{T}} T^{\theta}_{\text{TFA,}i} \right) ^p} +\sum_{j=1}^l\sum_{i=1}^n{\mathbf{1}_{\theta}\left( h_j ,g_i  \right)   \Big( {d}_{\boldsymbol{c}}^{(p)}\left( h_j ,g_i \right) \Big) ^p} \bigg)^{\frac{1}{p}} \label{eq:Star_ID_case3}\\
&= {d}_{\boldsymbol{c}}^{(p)}\left( \mathcal{H},\mathcal{F} \right)+{d}_{\boldsymbol{c}}^{(p)}\left( \mathcal{H},\mathcal{G} \right)   \nonumber
\end{align}
where \eqref{d_cpfg_m} was used in \eqref{eq:Case-3-prof}. 
}
\end{table*}




\end{document}